\documentclass[journal]{IEEEtran}

\usepackage{amsmath,amssymb,amsfonts}
\usepackage{amsthm}
\usepackage{bm}
\usepackage{graphicx}
\usepackage{cite}
\usepackage{booktabs}
\usepackage{comment}
\usepackage{url}
\usepackage{color}
\usepackage{float}
\usepackage{stfloats}
\usepackage{textcomp}
\usepackage[caption=false,font=footnotesize]{subfig}
\usepackage[nameinlink,noabbrev]{cleveref}
\usepackage{microtype}

\usepackage{algorithm}
\usepackage{algorithmic}

\newcommand{\1}{\mathsf{1}}

\DeclareMathOperator*{\argmax}{arg\,max}
\DeclareMathOperator*{\argmin}{arg\,min}

\newcommand{\sat}[1]{\min\!\left\{\,1,\left[#1\right]_+\,\right\}}
\newcommand{\Ball}[2]{\mathbb{B}_{#1} \left(#2\right)}
\usepackage{booktabs}
\usepackage{tabularx}
\usepackage{array}
\usepackage{threeparttable}
\usepackage{ragged2e}

\newcolumntype{Y}{>{\RaggedRight\arraybackslash}X}
\newtheorem{assumption}{Assumption}
\newtheorem{definition}{Definition}
\newtheorem{theorem}{Theorem}

\newtheorem{lemma}{Lemma}
\newtheorem{problem}{Problem}
\newtheorem{remark}{Remark}
\newtheorem{corollary}{Corollary}
\crefname{theorem}{Theorem}{Theorems}
\crefname{remark}{Remark}{Remarks}
\crefname{lemma}{Lemma}{Lemmas}
\crefname{corollary}{Corollary}{Corollary}
\crefname{assumption}{Assumption}{Assumptions}
\crefname{definition}{Definition}{Definitions}
\crefname{problem}{Problem}{Problems}
\crefname{figure}{Figure}{Figures}
\begin{document}
\setlength\textfloatsep{8pt}
\setlength\intextsep{8pt}

\title{Data-Driven Synthesis of Probabilistic Controlled Invariant Sets for Linear MDPs}

\author{Kazumune Hashimoto, Shunki Kimura, Kazunobu Serizawa, Junya Ikemoto, Yulong Gao, Kai Cai%
\thanks{Kazumune Hashimoto, Shunki Kimura, Kazunobu Serizawa, Junya Ikemoto are with the Graduate School of Engineering, The University of Osaka, 2-1 Yamadaoka, Suita, Osaka 565-0871, Japan. Yulong Gao is with the Department of Electrical and Electronic Engineering, Imperial College London. Kai Cai is with the Graduate School of Informatics, Osaka Metropolitan University}%
\thanks{Corresponding author: Kazumune Hashimoto. 
This work was partially supported by JST
ACT-X JPMJAX23CK, and JSPS KAKENHI Grant Numbers 25K07794
and 22KK0155}%
}

\markboth{}{Hashimoto \MakeLowercase{\textit{et al.}}: Data-Driven Synthesis of Probabilistic Controlled Invariant Sets for Linear MDPs}

\maketitle

\begin{abstract}
We study data-driven computation of \emph{probabilistic controlled invariant sets} (PCIS) for safety-critical reinforcement learning under unknown dynamics.
Assuming a linear MDP model, we use regularized least squares and self-normalized confidence bounds to construct a conservative estimate of the states from which the system can be kept inside a prescribed safe region over an \(N\)-step horizon, together with the corresponding set-valued safe action map.
This construction is obtained through a backward recursion and can be interpreted as a conservative approximation of the \(N\)-step safety predecessor operator.
When the associated conservative-inclusion event holds, a conservative fixed point of the approximate recursion can be certified as an \((N,\epsilon)\)-PCIS with confidence at least \(\eta\).
For continuous state spaces, we introduce a lattice abstraction and a Lipschitz-based discretization error bound to obtain a tractable approximation scheme.
Finally, we use the resulting conservative fixed-point approximation as a runtime candidate PCIS  in a practical shielding architecture with iterative updates, and illustrate the approach on a numerical experiment.
\end{abstract}

\begin{IEEEkeywords}
Safe reinforcement learning, linear MDPs, Controlled invariant set, Safe exploration
\end{IEEEkeywords}

\section{Introduction}\label{sec:intro}

Reinforcement learning (RL) enables an agent to learn a control policy through trial-and-error interactions with its environment.
Recent successes in games \cite{mnih2015humanlevel} have spurred growing interest in deploying RL in real-world systems such as robotics, autonomous driving, and industrial control.
However, because RL fundamentally relies on exploration, it may select actions that violate safety constraints during learning, potentially leading to severe accidents or system damage. In safety-critical domains, therefore, it is important to ensure safety not only for the final deployed policy but also throughout the learning process.
This motivates \emph{safe reinforcement learning} (safe RL), which augments the learning loop with explicit safety considerations \cite{garcia2015survey,gu2022review,brunke2022safelearning}.

Despite substantial progress, guaranteeing safety during exploration in unknown environments remains challenging.
Some approaches ensure safe learning by assuming additional mechanisms, such as a recovery controller, a reset policy, or an explicit emergency action that can steer the system back to a safe region when safety is at risk \cite{Tommaso:17,thananjeyan2021recovery}.
Although theoretically powerful, such assumptions may be unrealistic or undesirable in many physical systems.
Other approaches address safety through constrained policy optimization, safe exploration with learned constraints, or asymptotic bounds on cumulative violations \cite{achiam2017cpo,chow2018lyapunov,wachi2020cmdp,ghosh2022modelfree,Wei:2024,hashimoto:23}.
These approaches are highly relevant for safe learning under constraints, but their guarantees are typically expressed in terms of feasible policy updates, expected constraint satisfaction, or cumulative violation bounds, rather than as an external runtime shield defined by an explicit (probabilistic) safe set and updated directly from data under unknown dynamics.

A complementary line of work enforces safety online through shielding or safety filtering.
Representative examples include temporal-logic shielding, probabilistic shielding, online shielding, and approximate model-based shielding \cite{alshiekh2018safe,jansen2020probshield,konighofer2023onlineshield,goodall2023approxshield}.
In addition, predictive safety filters, reachability-based safety frameworks, Gaussian-process-based safety filters, control barrier function methods, and safe model-based RL provide related runtime intervention mechanisms \cite{wabersich2021psf,hsu2024safetyfilter,safelearning2,japtap,taylor2020cbf,felix2}.
These approaches show that online intervention can be highly effective once an appropriate safe set or safety filter is available.
This leaves a central question for safe RL: can such a probabilistic safe set be constructed and enlarged directly from data when the dynamics are unknown?

We approach this question through the lens of \emph{controlled invariance} \cite{blanchini1999a}.
Classical controlled invariant sets provide a set-theoretic characterization of safety: if the current state lies in such a set, then there exists a control action that keeps the trajectory inside the set indefinitely \cite{blanchini1999a,gilbert1991a}.
For stochastic dynamics and Markov decision processes (MDPs), we adopt the probabilistic counterpart introduced in \cite{yulong2021}, namely \emph{probabilistic controlled invariant sets} (PCIS).
Informally, a PCIS is a subset of the safety region from which the agent can choose actions so that the trajectory remains safe with high probability over a prescribed horizon.
Related reachability- and feasible-set-based perspectives have also recently been explored in RL \cite{yu2022reachability,wachi2020cmdp}.
When a PCIS is available, it naturally induces a runtime shield: at each step, the RL learner proposes an action, and the shield restricts execution to actions certified safe with respect to the current candidate PCIS.

The main challenge is to compute such a probabilistic safe set when the transition kernel is unknown and the state space may be continuous.
In this paper, we consider MDPs with unknown dynamics, and adopt the \emph{linear MDP} framework \cite{jin20a,Papini:2021}, in which the transition kernel is represented through a known feature map and unknown linear coefficients.
Recent safe RL results with linear function approximation have shown that this structure can support regret analysis under several formulations, including unknown safety constraints, instantaneous hard constraints, and cumulative constraints \cite{amani2021safe,ghosh2022modelfree,shi2023hard,Wei:2024}.
Building on this structure, and using regularized least squares together with self-normalized confidence bounds \cite{abbasi2011online}, we derive an \(\eta\)-conservative approximation of the \(N\)-step safety predecessor operator for a fixed reference set.
We then show that a conservative fixed point satisfying the corresponding conservative-inclusion event is certified as an \((N,\epsilon)\)-PCIS, and that the same backward recursion induces a stage-dependent family of set-valued safe action maps.
For continuous state spaces, we further introduce a lattice abstraction together with a Lipschitz-based discretization bound to obtain a tractable approximation scheme.

Finally, we embed the resulting safe action maps into a practical shielding architecture.
Starting from an initial PCIS, the shield filters the executed actions of an arbitrary RL learner, while the current candidate PCIS and the accompanying safe action map are updated from accumulated data.
The theoretical results in this paper certify conservative operator evaluation for a fixed set.
When the same dataset is reused to search for a conservative fixed point and to repeat shield updates online, the resulting procedure should therefore be interpreted as a practical data-driven shielding template motivated by the theory, unless an additional certification argument is available.
\begin{table*}[t]
\centering
\caption{Assumption-oriented comparison with representative approaches closest to our setting.}
\label{tab:intro_comparison}
\footnotesize
\setlength{\tabcolsep}{4pt}
\renewcommand{\arraystretch}{1.12}
\begin{threeparttable}
\begin{tabularx}{\textwidth}{@{}p{3.25cm}YYY@{}}
\toprule
Paradigm &
What must be available a-priori? &
How is safety enforced during learning? &
What is synthesized from data? \\
\midrule

Data-driven invariant-set synthesis / verification
\cite{korda2020ciset,chen2018mrci,mulagaleti2022ris,mejari2023rci,kashani2025invariant,strong2024verification,griffioen2024gp}
&
State/input constraints together with a model class, geometric set parameterization, excitation/Lipschitz assumptions, or a probabilistic dynamics prior
&
Invariant-type sets are synthesized or certified offline / in batch, often together with a controller or governor
&
Positive / robust / controlled / probabilistic invariant set, often together with a controller or governor
\\

Recovery/reset-based safe exploration
\cite{Tommaso:17,thananjeyan2021recovery}
&
Recovery policy, reset mechanism, or a learned recovery model
&
Runtime fallback to a recovery action when risk is high
&
Recovery zone, risk predictor, or fallback policy
\\

CMDP / constrained policy optimization
\cite{achiam2017cpo,chow2018lyapunov}
&
Constraint costs and a feasible (or approximately feasible) policy-update mechanism
&
Safety is incorporated directly into the RL policy update
&
Policy and value / constraint estimates
\\

Safe-region expansion / abstraction-based methods
\cite{hashimoto:23,HASHIMOTO2022110646}
&
Initial safe seed together with uncertainty-quantification, smoothness, or abstraction assumptions
&
Expand a certified safe region and optimize inside it
&
Certified safe region and sometimes a safety controller
\\

Given-model shielding / safety filtering
\cite{alshiekh2018safe,jansen2020probshield,konighofer2023onlineshield,wabersich2021psf,goodall2023approxshield}
&
Safety-relevant model fragment or a given safety filter
&
Runtime blocking or modification of unsafe proposed actions
&
Shield or filter parameters
\\\\

\textbf{This work}
&
\textbf{Known feature map for an unknown stochastic linear MDP and an initial shield seed set}
&
\textbf{External PCIS-based shield updated from transition data}
&
\textbf{Conservative fixed point of a probabilistic safety operator (certifiable as an \((N,\epsilon)\)-PCIS) and a set-valued safe action map}
\\

\bottomrule
\end{tabularx}
\end{threeparttable}
\end{table*}

\textbf{Related work.}
Table~I compares the approaches most relevant to our setting along three axes:
what must be specified a-priori, how safety is enforced during learning,
and what safety object is synthesized from data.

On the RL side, recovery/reset-based methods guarantee safety during learning
by assuming an auxiliary fallback mechanism such as a trusted recovery controller,
a reset policy, or a learned recovery model \cite{Tommaso:17,thananjeyan2021recovery}.
CMDP/CPO-style approaches instead incorporate safety directly into the policy
update and typically target expected or cumulative constraint satisfaction for the
learned policy \cite{achiam2017cpo,chow2018lyapunov}. Safe-region expansion and
abstraction-based methods protect exploration by growing a certified safe region
or maintaining a safety abstraction under uncertainty \cite{hashimoto:23,HASHIMOTO2022110646}.
These lines of work are highly relevant, but they typically do not yield an external
runtime shield together with an explicit probabilistic invariant-set object
recomputed from transition data.

A particularly relevant control-theoretic line of work studies data-driven synthesis
or verification of invariant sets. \cite{korda2020ciset} computes approximations of maximum positively
invariant and maximum controlled invariant sets directly from one-step transitions via
convex optimization. For uncertain linear and LPV systems,
data-driven robust invariant or robust control invariant sets and associated controllers
can be synthesized directly from data, without first identifying a nominal model
\cite{chen2018mrci,mulagaleti2022ris,mejari2023rci}. For nonlinear systems, recent
works study positive invariant-set synthesis or verification directly from sampled data
\cite{kashani2025invariant,strong2024verification}. Closest in spirit to the
probabilistic object considered here is \cite{griffioen2024gp}, which computes probabilistic controlled invariant sets for nonlinear systems together with feedback controllers. In contrast, we
focus on unknown stochastic linear MDPs, where hard invariance guarantee is often too restrictive and
an \((N,\epsilon)\)-PCIS is the more natural object. The linear-MDP structure lets us estimate the probabilistic predecessor operator directly from transition data using least squares and confidence bounds, and thereby produce not only a conservative fixed-point safety set but also a set-valued safe action map that can serve as an external shield for RL.

Shielding and safety-filtering methods are also closely related because they intervene
online by blocking or modifying unsafe proposed actions
\cite{alshiekh2018safe,jansen2020probshield,konighofer2023onlineshield,wabersich2021psf,goodall2023approxshield}.
In most existing works, however, the shield or filter is either given a-priori or
constructed from a safety-relevant model fragment. Our contribution is complementary:
starting from an initial PCIS seed, we reconstruct the shield directly from accumulated
transition data via a conservative backward recursion. On the statistical-modeling side,
several recent safe RL works also exploit linear function approximation or linear-MDP-type
structure under unknown constraints or hard safety conditions
\cite{amani2021safe,ghosh2022modelfree,shi2023hard,Wei:2024}; their guarantees are
usually phrased in terms of regret, cumulative constraint violations, or safe action
selection inside the RL update, rather than in terms of an external shield built from
an explicit probabilistic invariant-set object.

In summary, our main contributions are as follows.
\begin{itemize}
  \item We derive an \(\eta\)-conservative approximation of the \(N\)-step safety (or, predecessor) operator for unknown linear MDPs using regularized least squares and self-normalized confidence bounds.
  \item We show that a conservative fixed point of this operator, when the corresponding conservative-inclusion event holds, is certified as an \((N,\epsilon)\)-PCIS.
  \item We propose a lattice abstraction for continuous state spaces and incorporate an explicit Lipschitz-based discretization error term, yielding a tractable approximation scheme in continuous-state settings.
\end{itemize}

\smallskip
\noindent
\textbf{Notation.} Let \(\mathbb{N}\), \(\mathbb{N}_{\geq 0}\), \(\mathbb{N}_{>0}\), and \(\mathbb{N}_{a:b}\) denote the sets of integers, non-negative integers, positive integers, and integers in the interval \([a,b]\), respectively.
Let \(\mathbb{R}\), \(\mathbb{R}_{\geq 0}\), \(\mathbb{R}_{>0}\), and \(\mathbb{R}_{a:b}\) denote the sets of real numbers, non-negative real numbers, positive real numbers, and real numbers in the interval \([a,b]\), respectively.
We denote by \(\|\cdot\|\) and \(\|\cdot\|_\infty\) the Euclidean norm and the \(\infty\)-norm, respectively.
Given \(\Omega \subset \mathbb{R}^n\), let \(\1_{\Omega}:\mathbb{R}^n \rightarrow \{0,1\}\) denote the indicator function defined by \(\1_{\Omega}(x)=1\) if \(x\in\Omega\) and \(\1_{\Omega}(x)=0\) otherwise.
Given \(\delta_x>0\) and \(x\in\mathbb{R}^n\), let \(\Ball{\delta_x}{x}\subset\mathbb{R}^n\) denote the \(\infty\)-norm ball centered at \(x\), i.e.,
\[
\Ball{\delta_x}{x}=\{x' \in \mathbb{R}^n : \|x-x'\|_\infty \le \delta_x\}.
\]
For a signed finite measure \(\mu\), \(\|\mu\|_{\mathrm{TV}}\) denotes its total variation norm.

\section{Problem Setup}\label{sec:problem_setup}
We consider a discrete-time Markov decision process (MDP)
\(
\mathcal{M}=(\mathcal{X},\mathcal{U},\mathbb{P},\mathcal{X}_S)
\),
where \(\mathcal{X}\subseteq\mathbb{R}^n\) is a (possibly infinite) Borel state space,
\(\mathcal{U}\) is a finite action set,
\(\mathbb{P}(\cdot\mid x,u)\) is an unknown transition kernel on
\((\mathcal{X},\mathbb{B}(\mathcal{X}))\),
and \(\mathcal{X}_S\subseteq\mathcal{X}\) is the safety set. Although the transition kernel is unknown, we assume that it admits a linear representation with respect to a known feature map \(\phi\) \cite{jin20a}.

\begin{assumption}[Linear MDP structure]\label{assumption:linear}
\normalfont
There exist a known feature map
\(
\phi:\mathcal{X}\times\mathcal{U}\rightarrow\mathbb{R}^d
\)
and unknown signed finite measures
\(
\nu_1,\ldots,\nu_d
\)
on \((\mathcal{X},\mathbb{B}(\mathcal{X}))\) such that, for every measurable set
\(
A\in\mathbb{B}(\mathcal{X})
\)
and every
\(
(x,u)\in\mathcal{X}\times\mathcal{U}
\),
\begin{equation}
\mathbb{P}(A\mid x,u)
=
\sum_{\ell=1}^{d}\phi_\ell(x,u)\,\nu_\ell(A)
=
\bigl\langle \phi(x,u),\nu(A)\bigr\rangle,
\label{eq:linear_mdp}
\end{equation}
where
\(
\nu(A)=[\nu_1(A),\ldots,\nu_d(A)]^\top
\).
Moreover, \(\|\nu_\ell\|_{\mathrm{TV}}\le 1\) for all \(\ell\in\{1,\ldots,d\}\), and
\(\|\phi(x,u)\|_2\le 1\) for all \((x,u)\in\mathcal{X}\times\mathcal{U}\).
The feature map is Lipschitz continuous in the state variable: there exists \(L_\phi\ge 0\) such that
\begin{equation}
\bigl|\phi_\ell(x,u)-\phi_\ell(x',u)\bigr|
\le
L_\phi\|x-x'\|_\infty
\label{eq:phi_lipschitz}
\end{equation}
for all \(x,x'\in\mathcal{X}\), \(u\in\mathcal{U}\), and \(\ell\in\{1,\ldots,d\}\).
\end{assumption}
In general, the motivation of employing the linear-MDP in RL is that it provides a finite-dimensional representation of the transition kernel that enables statistically efficient generalization across state-action pairs in large or continuous spaces. This assumption has become a standard tractable non-tabular model in RL: when the feature map is known, optimistic methods (e.g., upper confidence bound (UCB)-based approaches) achieve regret bounds polynomial in the feature dimension and horizon and \textit{independent} of the cardinality of the state and action spaces \cite{jin20a,zhou21a}. Recent work further shows that the required features can themselves be learned in a provably efficient and practically effective manner \cite{zhang22x}. We therefore adopt the linear-MDP model not as a claim that the physical plant is literally linear, but as a compromise between expressivity, sample efficiency, and computational tractability. 
Compared with black-box assumptions, this structure is strong enough to support finite-sample confidence sets and dynamic-programming-based planning, as we will see in the next section.

For a horizon \(N\in\mathbb{N}\), we consider finite-horizon nonstationary deterministic Markov policies
\[
\Pi_N=(\mu_0,\mu_1,\ldots,\mu_{N-1}),
\]
where each \(\mu_k:\mathcal{X}\to\mathcal{U}\).
Starting from \(x_0\in\mathcal{X}\), the resulting trajectory satisfies
\[
u_k=\mu_k(x_k),\ 
x_{k+1}\sim\mathbb{P}(\cdot\mid x_k,u_k),\ k=0,\ldots,N-1.
\]
For any \(\Omega\subseteq\mathcal{X}_S\), policy \(\Pi_N\), and initial state \(x_0\in\Omega\), define
\begin{equation}
p^\Omega_{\Pi_N}(x_0)
=
\Pr \left(
x_k\in\Omega,\ \forall k=0,\ldots,N
\right).
\label{eq:pdef}
\end{equation}
Thus, \(N\)-step safety means safety over \(N\) transitions, including the terminal state \(x_N\).

\begin{definition}[$(N,\epsilon)$-PCIS]\label{def:CIS}
\normalfont
A set \(\mathcal{X}_{\mathrm{PCIS}}\subseteq\mathcal{X}_S\) is an
\((N,\epsilon)\)-\emph{probabilistic controlled invariant set} (PCIS) if, for every
\(x\in\mathcal{X}_{\mathrm{PCIS}}\), there exists a deterministic Markov policy \(\Pi_N\) such that
\begin{equation}
p^{\mathcal{X}_{\mathrm{PCIS}}}_{\Pi_N}(x)\ge 1-\epsilon.
\label{eq:CIS}
\end{equation}
\end{definition}

For \(\Omega\subseteq\mathcal{X}_S\), let \(\mathcal{Q}^N_\epsilon(\Omega)\) denote the exact safety operator
\begin{equation}
\mathcal{Q}^N_\epsilon(\Omega)
=
\Bigl\{
x\in\Omega:
\exists \Pi_N\ \text{s.t.}\ p^\Omega_{\Pi_N}(x)\ge 1-\epsilon
\Bigr\}.
\label{eq:Q_exact}
\end{equation}
Starting from \(\Omega_0=\mathcal{X}_S\), define the decreasing sequence
\[
\Omega_{j+1}=\mathcal{Q}^N_\epsilon(\Omega_j),\ \  j=0,1,\ldots.
\]
Since \(\Omega_{j+1}\subseteq\Omega_j\), the maximal \((N,\epsilon)\)-PCIS in \(\mathcal{X}_S\) is given by
\[
\mathcal{X}^*_{\mathrm{PCIS}}
=
\bigcap_{j=0}^{\infty}\Omega_j,
\]
see, e.g., \cite{yulong2021}.
Thus, if the exact operator \textit{were} available, one could characterize a PCIS as a fixed point of
\(\mathcal{Q}^N_\epsilon\).

In the setting considered here, where the transition kernel is unknown, the exact operator $\mathcal{Q}^N_\epsilon(\Omega)$ is indeed not directly available and will thus be inferred from data. This motivates taking a fixed point of a \textit{conservative} approximation as the primary computable object and interpreting it as a \textit{candidate}
 $(N, \epsilon)$-{PCIS}.
\begin{definition}[$\eta$-conservative approximation of the operator/candidate PCIS]\label{def:conservative}
\normalfont
Fix \(\Omega\subseteq\mathcal{X}_S\).
A set
\[
\widetilde{\mathcal{Q}}^N_\epsilon(\Omega)\subseteq\Omega
\]
is called an \emph{\(\eta\)-conservative approximation} of
\(\mathcal{Q}^N_\epsilon(\Omega)\) if
\begin{equation}
\Pr\left(
\widetilde{\mathcal{Q}}^N_\epsilon(\Omega)
\subseteq
\mathcal{Q}^N_\epsilon(\Omega)
\right)
\ge \eta.
\label{eq:Q_conservative}
\end{equation}
Moreover, a set
\[
\widehat{\Omega}\subseteq\mathcal{X}_S
\]
is called a \emph{conservative fixed point} if
\begin{equation}
\widehat{\Omega}
=
\widetilde{\mathcal{Q}}^N_\epsilon(\widehat{\Omega}).
\label{eq:conservative_fixed_point}
\end{equation}
Such a set is referred to as a \textit{candidate} $(N, \epsilon)$-{PCIS}.
\end{definition}

Let us emphasize that the conservative fixed point $\widehat{\Omega}$ is, at this stage, only a
\textit{candidate} $(N,\epsilon)$-PCIS.
The guarantee in \eqref{eq:Q_conservative} is stated for a prescribed reference set $\Omega$
and therefore does not automatically apply to the fixed point $\widehat{\Omega}$ returned by the
fixed-point search.
A separate certification argument is thus needed, and will be provided in the next section.
\begin{problem}\label{problem:objective}
\normalfont
Under \cref{assumption:linear}, given a set
\(
\Omega\subseteq\mathcal{X}_S
\),
a horizon
\(
N\in\mathbb{N}
\),
a safety tolerance
\(
\epsilon\in(0,1)
\),
a confidence level
\(
\eta\in(0,1)
\),
and a transition dataset
\[
\mathcal{D}_T=\{(x_t,u_t,x_{t+1})\}_{t=0}^{T-1}\subseteq\mathcal{X}_S\times\mathcal{U}\times\mathcal{X}_S,
\]
construct an \(\eta\)-conservative approximation
\(
\widetilde{\mathcal{Q}}^N_\epsilon(\Omega)
\)
of the exact safety operator
\(
\mathcal{Q}^N_\epsilon(\Omega)
\), as well as the associated candidate $(N, \epsilon)$-PCIS.
\end{problem}


\section{Main Results}\label{sec:main_results}

\subsection{Computing $\eta$-conservative operator for linear MDPs}

Let us first provide a standard consequence of \cref{assumption:linear}.

\begin{lemma}\label{lemma:expectation}
\normalfont
Let \cref{assumption:linear} hold and let \(p:\mathcal{X}\to[0,1]\) be measurable.
Then there exists \(\theta_p\in\mathbb{R}^d\) such that, for all \((x,u)\in\mathcal{X}\times\mathcal{U}\),
\begin{equation}
\mathbb{E}[p(x')\mid x,u]
=
\langle \theta_p,\phi(x,u)\rangle,
\label{eq:lin_expect}
\end{equation}
where \(x'\sim\mathbb{P}(\cdot\mid x,u)\).
\end{lemma}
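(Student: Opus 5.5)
The plan is to define \(\theta_p\) componentwise by integrating \(p\) against the signed measures \(\nu_\ell\) from \cref{assumption:linear}, and then to lift the set-level identity \eqref{eq:linear_mdp} to bounded measurable functions. Concretely, set
\[
(\theta_p)_\ell=\int_{\mathcal{X}} p(y)\,\nu_\ell(dy),\qquad \ell=1,\ldots,d.
\]
These integrals are well defined and finite. Since \(p\) is measurable with values in \([0,1]\) and \(\nu_\ell\) is a signed finite measure, we can use the Jordan decomposition \(\nu_\ell=\nu_\ell^+-\nu_\ell^-\). This gives \(|(\theta_p)_\ell|\le \int p\, d|\nu_\ell|\le\|\nu_\ell\|_{\mathrm{TV}}\le 1\), and hence also \(\|\theta_p\|_2\le\sqrt d\), which is useful later.

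Next, fix \((x,u)\) and compare the two measures on \((\mathcal{X},\mathbb{B}(\mathcal{X}))\) given by \(\mathbb{P}(\cdot\mid x,u)\) and \(\sum_\ell \phi_\ell(x,u)\nu_\ell(\cdot)\). By \eqref{eq:linear_mdp} they agree on every Borel set \(A\), so they are the same signed measure. Integrating \(p\) against both sides gives
\[
\mathbb{E}[p(x')\mid x,u]=\int p\,d\mathbb{P}(\cdot\mid x,u)=\sum_{\ell}\phi_\ell(x,u)\int p\,d\nu_\ell=\langle\theta_p,\phi(x,u)\rangle .
\]
The middle equality is linearity of the integral with respect to a finite linear combination of finite signed measures.

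The only step needing care is the interchange of integral and linear combination in the signed-measure setting. If one prefers to avoid invoking equality of measures directly, I would use the standard machine. First, for indicators \(p=\1_A\) the identity is exactly \eqref{eq:linear_mdp}. Second, by linearity it extends to simple functions. Third, for general measurable \(p\in[0,1]\), take simple \(0\le s_k\uparrow p\). The left side converges by monotone convergence, since \(\mathbb{P}(\cdot\mid x,u)\) is a probability measure. Each \(\int s_k\,d\nu_\ell\) converges to \(\int p\,d\nu_\ell\) by dominated convergence applied to \(\nu_\ell^\pm\), with dominating function \(1\), which is integrable because the measures are finite. A finite sum over \(\ell\) of convergent sequences converges, which completes the argument.
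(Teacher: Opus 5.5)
Your proposal is correct and follows the same route as the paper: define $(\theta_p)_\ell=\int_{\mathcal{X}} p\,d\nu_\ell$, bound it by $\|\nu_\ell\|_{\mathrm{TV}}\le 1$, and integrate the linear representation \eqref{eq:linear_mdp} term by term. The only difference is that you spell out the passage from sets to bounded measurable functions, using indicators, simple functions, monotone limits and the Jordan decomposition, whereas the paper uses that step without comment.
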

\begin{proof}
For each \(\ell\), define
\(
\theta_{p,\ell}=\int_{\mathcal{X}} p(x')\,\nu_\ell(dx')
\).
Since \(p\in[0,1]\) and \(\|\nu_\ell\|_{\mathrm{TV}}\le 1\), each integral is finite and
\(
|\theta_{p,\ell}|
\le
\|p\|_\infty \|\nu_\ell\|_{\mathrm{TV}}
\le 1
\).
Let
\(
\theta_p=[\theta_{p,1},\ldots,\theta_{p,d}]^\top
\).
Then, using \cref{assumption:linear},
\begin{align}
\mathbb{E}[p(x')\mid x,u]
&=
\int_{\mathcal{X}} p(x')\,\mathbb{P}(dx'\mid x,u) \notag \\ 
&=
\sum_{\ell=1}^{d}\phi_\ell(x,u)\int_{\mathcal{X}} p(x')\,\nu_\ell(dx')\notag \\ 
&=
\langle \theta_p,\phi(x,u)\rangle.
\end{align}
This proves \eqref{eq:lin_expect}.
\end{proof}

For a fixed \(\Omega\subseteq\mathcal{X}_S\), define the dynamic-programming recursion
\begin{align}
p_N^\Omega(x)
&=
\1_\Omega(x), \notag\\
p_j^\Omega(x)
&=
\1_\Omega(x)\,
\max_{u\in\mathcal{U}}
\mathbb{E}[p_{j+1}^\Omega(x')\mid x,u],
\ j=N-1,\ldots,0.
\label{eq:p_exact_rec}
\end{align}
Then,
\(
\sup_{\Pi_N} p^\Omega_{\Pi_N}(x)=p_0^\Omega(x)
\)
(see, e.g., \cite{yulong2021}), and therefore
\begin{equation}
\mathcal{Q}^N_\epsilon(\Omega)
=
\{x\in\Omega:\ p_0^\Omega(x)\ge 1-\epsilon\}.
\label{eq:Q_exact_dp}
\end{equation}
The index \(j\in\{0,\ldots,N-1\}\) refers to the
\emph{backward stage} of the recursion \eqref{eq:p_exact_rec}. In other words, stage \(j\) computes (or estimates) the conditional expectation of the stage-\((j+1)\) continuation value.
To avoid the dependence issue caused by reusing the same data across all backward stages, we use stagewise sample splitting in the theoretical guarantee.
Towards this end, let
\[
\mathcal{D}^{(j)}
=
\{(x_t^{(j)},u_t^{(j)},x_{t+1}^{(j)})\}_{t=0}^{T_j-1},
\ j=0,\ldots,N-1,
\]
be independent datasets. Here, these datasets are assumed to be obtained from independent data-collection procedures rather than by post hoc partitioning of a single replay buffer; practical procedures to enforce such independence are discussed later in \cref{rem:shield_independence}.

For each \(j\), define
\begin{equation}
D_j
=
\begin{bmatrix}
\phi(x_0^{(j)},u_0^{(j)})^\top\\
\vdots\\
\phi(x_{T_j-1}^{(j)},u_{T_j-1}^{(j)})^\top
\end{bmatrix},
\ 
V_j=D_j^\top D_j + I_d.
\label{eq:DjVj}
\end{equation}
and
\begin{equation}
\sigma_j(x,u)
=
\sqrt{\phi(x,u)^\top V_j^{-1}\phi(x,u)}.
\label{eq:sigmaj}
\end{equation}
The next result is stated as a reduction theorem: any choice of the parameter that makes the lower-confidence events below hold yields a conservative operator.
\begin{theorem}[$\eta$-conservative approximation operator guarantee]\label{theorem:Qtilde}
\normalfont
Let \cref{assumption:linear} hold and fix \(\Omega\subseteq\mathcal{X}_S\).
Let \(\delta_j\in(0,1)\), \(j=0,\ldots,N-1\), satisfy
\(
\sum_{j=0}^{N-1}\delta_j \le 1-\eta
\). Let \(\tilde p_N:\mathcal{X}\to[0,1]\) be given by
\begin{equation}
\tilde p_N(x)=\1_\Omega(x).
\label{eq:ptilde_terminal}
\end{equation}
For \(j=N-1,\ldots,0\), let
\begin{equation}
y_t^{(j)}
=
\tilde p_{j+1}(x_{t+1}^{(j)}),\ \ 
\hat\theta_j
=
V_j^{-1}D_j^\top y^{(j)},
\label{eq:ridge_est}
\end{equation}
where
\(
y^{(j)}=[y_0^{(j)},\ldots,y_{T_j-1}^{(j)}]^\top
\).
Assume that \(\beta_j=\beta_j(\delta_j)\) is chosen so that the lower-confidence event
\begin{equation}
\label{eq:Ej}
\mathcal{E}_j =
\Biggl\{
\begin{aligned}
&\forall (x,u)\in\mathcal{X}\times\mathcal{U}:\\
&\mathbb{E}[\tilde p_{j+1}(x')\mid x,u]\ge \ell_j(x,u)
\end{aligned}
\Biggr\},
\end{equation}
where $\ell_j(x,u) =
\hat\theta_j^\top\phi(x,u)-\beta_j\sigma_j(x,u)$, 
satisfies\footnote{Indeed, the lower-confidence event in \eqref{eq:Ej} is obtained by applying a standard self-normalized concentration bound for regularized least squares with conditionally sub-Gaussian noise; see, e.g.,\cite{abbasi2011online}.}
\begin{equation}
\Pr(\mathcal{E}_j\mid \tilde p_{j+1})\ge 1-\delta_j.
\label{eq:Ej_prob}
\end{equation}
Finally, for \(j=N-1,\ldots,0\), define
\begin{equation}
\tilde p_j(x)
=
\1_\Omega(x)\,
\max_{u\in\mathcal{U}}
\sat{\ell_j(x,u)}.
\label{eq:ptilde_rec}
\end{equation}
Then, the induced operator
\begin{equation}
\widetilde{\mathcal{Q}}^N_\epsilon(\Omega)
=
\{x\in\Omega:\ \tilde p_0(x)\ge 1-\epsilon\}
\label{eq:Qtilde_set}
\end{equation}
is an \(\eta\)-conservative approximation of \(\mathcal{Q}^N_\epsilon(\Omega)\), i.e.,
\begin{equation}
\Pr \left(
\widetilde{\mathcal{Q}}^N_\epsilon(\Omega)
\subseteq
\mathcal{Q}^N_\epsilon(\Omega)
\right)
\ge \eta.
\label{eq:Qtilde_conservative}
\end{equation}
\end{theorem}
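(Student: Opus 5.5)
The plan is to show that on the intersection event $\mathcal{E}=\bigcap_{j=0}^{N-1}\mathcal{E}_j$ we have the pointwise domination $\tilde p_j(x)\le p_j^\Omega(x)$ for all $x\in\mathcal{X}$ and all $j$. The inclusion $\widetilde{\mathcal{Q}}^N_\epsilon(\Omega)\subseteq\mathcal{Q}^N_\epsilon(\Omega)$ then follows immediately from \eqref{eq:Q_exact_dp} and \eqref{eq:Qtilde_set}. What remains is to lower bound $\Pr(\mathcal{E})$ by $\eta$ via a union bound.

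\textbf{Deterministic step.} Assume $\mathcal{E}$ holds and argue by backward induction on $j$. At $j=N$ we have $\tilde p_N=\1_\Omega=p_N^\Omega$. Suppose $\tilde p_{j+1}\le p_{j+1}^\Omega$ pointwise. Since both functions are $[0,1]$-valued, monotonicity of conditional expectation gives $\mathbb{E}[\tilde p_{j+1}(x')\mid x,u]\le \mathbb{E}[p_{j+1}^\Omega(x')\mid x,u]$. On $\mathcal{E}_j$, $\ell_j(x,u)\le \mathbb{E}[\tilde p_{j+1}(x')\mid x,u]$. The clipping map $s\mapsto\min\{1,[s]_+\}$ is monotone, and it fixes the interval $[0,1]$, where the expectation lies. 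Hence $\min\{1,[\ell_j(x,u)]_+\}\le \mathbb{E}[p_{j+1}^\Omega(x')\mid x,u]$. Taking the maximum over the finite set $\mathcal{U}$ and multiplying by $\1_\Omega(x)$ yields $\tilde p_j\le p_j^\Omega$ by \eqref{eq:p_exact_rec} and \eqref{eq:ptilde_rec}. In particular $\tilde p_0(x)\ge1-\epsilon$ implies $p_0^\Omega(x)\ge 1-\epsilon$, so $x\in\mathcal{Q}^N_\epsilon(\Omega)$.

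\textbf{Probabilistic step.} This is the delicate part, because $\tilde p_{j+1}$ is random: it depends on the datasets $\mathcal{D}^{(j+1)},\ldots,\mathcal{D}^{(N-1)}$. The guarantee \eqref{eq:Ej_prob} is only conditional on $\tilde p_{j+1}$. I would use the stagewise independence of the datasets as follows. $\tilde p_{j+1}$ is measurable with respect to $\sigma(\mathcal{D}^{(j+1)},\ldots,\mathcal{D}^{(N-1)})$, and $\mathcal{D}^{(j)}$ is independent of this $\sigma$-algebra. Interpreting \eqref{eq:Ej_prob} as holding almost surely for the conditional probability given $\tilde p_{j+1}$, the tower property gives $\Pr(\mathcal{E}_j^c)=\mathbb{E}[\Pr(\mathcal{E}_j^c\mid\tilde p_{j+1})]\le\delta_j$. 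The union bound then gives $\Pr(\mathcal{E}^c)\le\sum_j\delta_j\le 1-\eta$, so $\Pr(\mathcal{E})\ge\eta$.

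Combining the two steps, $\Pr(\widetilde{\mathcal{Q}}^N_\epsilon(\Omega)\subseteq\mathcal{Q}^N_\epsilon(\Omega))\ge\Pr(\mathcal{E})\ge\eta$, which is \eqref{eq:Qtilde_conservative}.

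Two technical points remain. First, the measurability of $\mathcal{E}_j$, which is an event quantified over uncountably many $(x,u)$; I would take it as implicit in the assumption \eqref{eq:Ej_prob}, or pass to outer probability. Second, the fact that $\tilde p_{j+1}$ is measurable in $x$, which is needed for the conditional expectations to make sense. This holds because $\phi$ is continuous by \eqref{eq:phi_lipschitz}, $\Omega$ is Borel, and finite maxima and the clipping preserve measurability.
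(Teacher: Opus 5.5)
Your proposal is correct and follows the paper's proof essentially step for step: a backward induction on the intersection event $\mathcal{E}=\bigcap_{j=0}^{N-1}\mathcal{E}_j$ showing $\tilde p_j\le p_j^\Omega$ pointwise, followed by a union bound based on \eqref{eq:Ej_prob}. Your explicit tower-property argument, which turns the conditional guarantee into $\Pr(\mathcal{E}_j^c)\le\delta_j$, and your monotonicity-of-clipping justification simply make precise what the paper states in one line each.
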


\begin{proof}
For each stage \(j\), since \(\tilde p_{j+1}\) is computed from later stages only, it is independent of the dataset \(\mathcal{D}^{(j)}\).
Conditioned on \(\tilde p_{j+1}\), \cref{lemma:expectation} implies that there exists
\(
\theta_j\in\mathbb{R}^d
\)
such that
\[
\mathbb{E}[\tilde p_{j+1}(x')\mid x,u]
=
\theta_j^\top \phi(x,u)
\]
for all \((x,u)\). By backward induction from \eqref{eq:ptilde_terminal} and \eqref{eq:ptilde_rec},
\begin{equation}
0 \le \tilde p_j(x) \le 1,\ \forall x\in\mathcal{X},\ \forall j=0,\ldots,N.
\label{eq:ptilde_bounded}
\end{equation}
Hence, the stage-\(j\) regression targets satisfy
\(
y_t^{(j)}=\tilde p_{j+1}(x_{t+1}^{(j)})\in[0,1]
\)
for all \(t\).
Let
\(
\mathcal{E}=\bigcap_{j=0}^{N-1}\mathcal{E}_j
\).
By \eqref{eq:Ej_prob} and the union bound,
\[
\Pr(\mathcal{E})
\ge
1-\sum_{j=0}^{N-1}\delta_j
\ge \eta.
\]
We now prove by backward induction that, on \(\mathcal{E}\),
\begin{equation}
\tilde p_j(x)\le p_j^\Omega(x),\ \ 
\forall x\in\mathcal{X},\ \forall j=0,\ldots,N.
\label{eq:induction_goal}
\end{equation}
The claim is immediate at \(j=N\) since \(\tilde p_N=p_N^\Omega=\1_\Omega\).
Assume it holds at stage \(j+1\).
Then, for any \((x,u)\),
\[
\mathbb{E}[p_{j+1}^\Omega(x')\mid x,u]
\ge
\mathbb{E}[\tilde p_{j+1}(x')\mid x,u].
\]
On \(\mathcal{E}_j\),
\[
\mathbb{E}[\tilde p_{j+1}(x')\mid x,u]
\ge
\hat\theta_j^\top\phi(x,u)-\beta_j\sigma_j(x,u).
\]
Therefore,
\[
\mathbb{E}[p_{j+1}^\Omega(x')\mid x,u]
\ge
\hat\theta_j^\top\phi(x,u)-\beta_j\sigma_j(x,u).
\]
Taking the maximum over \(u\), multiplying by \(\1_\Omega(x)\), and clipping to \([0,1]\) yields
\(
p_j^\Omega(x)\ge \tilde p_j(x)
\),
which proves \eqref{eq:induction_goal}.
Therefore, on \(\mathcal{E}\),
\[
\tilde p_0(x)\le p_0^\Omega(x),\ \ \forall x\in\mathcal{X}.
\]
Consequently,
\begin{align}
\widetilde{\mathcal{Q}}^N_\epsilon(\Omega)
&=
\{x\in\Omega:\tilde p_0(x)\ge 1-\epsilon\} \notag \\
&\subseteq
\{x\in\Omega:p_0^\Omega(x)\ge 1-\epsilon\}
=
\mathcal{Q}^N_\epsilon(\Omega).
\end{align}
Since \(\Pr(\mathcal{E})\ge \eta\), \eqref{eq:Qtilde_conservative} follows.
\end{proof}

The discussion above guarantees conservative inclusion only for a prescribed fixed reference set \(\Omega\). In other words, it does not by itself imply that a data-driven conservative fixed point, selected using the same data, is an \((N,\epsilon)\)-PCIS.
A natural way to obtain a rigorous high-confidence guarantee is to separate the data used to construct the set from the data used to certify it. 
\begin{corollary}[Certification of PCIS by sample splitting]
\label{theorem:independent_cert}
\normalfont
Let
\(
\mathcal{D}^{\mathrm{grow}}
\)
and
\(
\mathcal{D}^{\mathrm{cert}}
\)
be independent transition datasets.
Let
\(
\widehat{\Omega}_g
\subseteq \mathcal{X}_S
\)
be any $(N,\epsilon)$-PCIS candidate set constructed using only
\(
\mathcal{D}^{\mathrm{grow}}
\).
Assume that, using only
\(
\mathcal{D}^{\mathrm{cert}}
\),
construct an \(\eta\)-conservative approximation
\(
\widetilde{\mathcal{Q}}^{N,\mathrm{cert}}_\epsilon
\)
of
\(
\mathcal{Q}^N_\epsilon
\), i.e., for each $\Omega \subseteq \mathcal{X}_S$, 
\begin{equation}
\Pr\!\left(
\widetilde{\mathcal{Q}}^{N,\mathrm{cert}}_\epsilon(\Omega)
\subseteq
\mathcal{Q}^N_\epsilon(\Omega)
\right)
\ge \eta .
\label{eq:independent_cert_conservative}
\end{equation}
Then,
\begin{equation}
\Pr\!\left(
\widehat{\Omega}_g
\subseteq
\widetilde{\mathcal{Q}}^{N,\mathrm{cert}}_\epsilon(\widehat{\Omega}_g)
\ \Longrightarrow\
\widehat{\Omega}_g\ \text{is an }(N,\epsilon)\text{-PCIS}
\right)
\ge \eta.
\label{eq:independent_cert_unconditional}
\end{equation}
\end{corollary}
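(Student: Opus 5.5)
The argument conditions on the growth data so that $\widehat{\Omega}_g$ becomes a fixed reference set. Then it applies the certification guarantee \eqref{eq:independent_cert_conservative} to that set, and finally uses a deterministic fixed-point argument to pass from conservative inclusion to the PCIS property.

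\textbf{Deterministic step.} Fix any $\Omega\subseteq\mathcal{X}_S$. Suppose both $\Omega\subseteq\widetilde{\mathcal{Q}}^{N,\mathrm{cert}}_\epsilon(\Omega)$ and $\widetilde{\mathcal{Q}}^{N,\mathrm{cert}}_\epsilon(\Omega)\subseteq\mathcal{Q}^N_\epsilon(\Omega)$ hold. Then $\Omega\subseteq\mathcal{Q}^N_\epsilon(\Omega)$. By \eqref{eq:Q_exact}, for every $x\in\Omega$ there is a deterministic Markov policy $\Pi_N$ with $p^{\Omega}_{\Pi_N}(x)\ge 1-\epsilon$. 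Together with $\Omega\subseteq\mathcal{X}_S$, this is exactly \cref{def:CIS} with $\mathcal{X}_{\mathrm{PCIS}}=\Omega$. Hence the event
\[
\mathcal{G}(\Omega)=\bigl\{\widetilde{\mathcal{Q}}^{N,\mathrm{cert}}_\epsilon(\Omega)\subseteq\mathcal{Q}^N_\epsilon(\Omega)\bigr\}
\]
is contained in the event
\[
\bigl\{\Omega\subseteq\widetilde{\mathcal{Q}}^{N,\mathrm{cert}}_\epsilon(\Omega)\Longrightarrow \Omega \text{ is an }(N,\epsilon)\text{-PCIS}\bigr\}.
\]
The second event is the implication read as a material conditional; it holds automatically whenever the premise fails.

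\textbf{Probabilistic step.} $\widehat{\Omega}_g$ is a measurable function of $\mathcal{D}^{\mathrm{grow}}$ only, and $\mathcal{D}^{\mathrm{grow}}$ is independent of $\mathcal{D}^{\mathrm{cert}}$. Conditioning on $\mathcal{D}^{\mathrm{grow}}$ therefore fixes $\widehat{\Omega}_g=\Omega$ without changing the law of $\mathcal{D}^{\mathrm{cert}}$. Consequently $\widetilde{\mathcal{Q}}^{N,\mathrm{cert}}_\epsilon(\Omega)$, which is built from $\mathcal{D}^{\mathrm{cert}}$ only, has its unconditional distribution. Applying \eqref{eq:independent_cert_conservative} to this fixed $\Omega$ gives
\[
\Pr\bigl(\mathcal{G}(\widehat{\Omega}_g)\mid\mathcal{D}^{\mathrm{grow}}\bigr)\ge\eta
\]
almost surely. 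Taking expectations over $\mathcal{D}^{\mathrm{grow}}$ (the tower property) yields $\Pr(\mathcal{G}(\widehat{\Omega}_g))\ge\eta$. The deterministic step then gives \eqref{eq:independent_cert_unconditional}.

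\textbf{Main obstacle.} The delicate point is the conditioning step. It needs the guarantee \eqref{eq:independent_cert_conservative} to hold uniformly over every fixed $\Omega$, which the hypothesis asserts. It also needs the map $(\Omega,\mathcal{D}^{\mathrm{cert}})\mapsto\mathbf{1}\{\mathcal{G}(\Omega)\}$ to be jointly measurable, so that the conditional probability can be evaluated by freezing $\Omega$ (a substitution, or ``freezing'', lemma for independent random elements). I would state this measurability as a standing regularity convention rather than prove it, noting that it holds trivially when the candidate sets range over a countable family, such as the lattice abstractions introduced later.
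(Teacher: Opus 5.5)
Your proposal is correct and follows essentially the same route as the paper: condition on $\mathcal{D}^{\mathrm{grow}}$ so that $\widehat{\Omega}_g$ becomes a fixed set, apply the fixed-set guarantee, and on the good event deduce the PCIS property from $\widehat{\Omega}_g\subseteq\mathcal{Q}^N_\epsilon(\widehat{\Omega}_g)$. The differences are cosmetic. You read the PCIS definition directly off this inclusion, whereas the paper first passes through the fixed-point equality. You also spell out the tower-property and measurability steps, which the paper leaves implicit.
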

\begin{proof}
For a fixed
\(
\mathcal{D}^{\mathrm{grow}}
\), denote the resulting $(N,\epsilon)$-PCIS candidate set by
\(
\widehat{\Omega}_g
\subseteq \mathcal{X}_S
\). Notice that,
conditioned on
\(
\mathcal{D}^{\mathrm{grow}}
\),
the set
\(
\widehat{\Omega}_g
\)
is deterministic.
Since
\(
\mathcal{D}^{\mathrm{cert}}
\)
is independent of
\(
\mathcal{D}^{\mathrm{grow}}
\),
the conservative-inclusion guarantee \eqref{eq:independent_cert_conservative} applies to the fixed set
\(
\widehat{\Omega}_g
\),
and hence
\[
\Pr\!\left(
\widetilde{\mathcal{Q}}^{N,\mathrm{cert}}_\epsilon(\widehat{\Omega}_g)
\subseteq
\mathcal{Q}^N_\epsilon(\widehat{\Omega}_g)
\;\middle|\;
\mathcal{D}^{\mathrm{grow}}
\right)
\ge \eta
\]
almost surely. Now, consider the event
\[
\widehat{\Omega}_g
\subseteq
\widetilde{\mathcal{Q}}^{N,\mathrm{cert}}_\epsilon(\widehat{\Omega}_g).
\]
On the intersection of this event with
\[
\widetilde{\mathcal{Q}}^{N,\mathrm{cert}}_\epsilon(\widehat{\Omega}_g)
\subseteq
\mathcal{Q}^N_\epsilon(\widehat{\Omega}_g),
\]
we obtain
\[
\widehat{\Omega}_g
\subseteq
\widetilde{\mathcal{Q}}^{N,\mathrm{cert}}_\epsilon(\widehat{\Omega}_g)
\subseteq
\mathcal{Q}^N_\epsilon(\widehat{\Omega}_g).
\]
On the other hand, by definition of the exact operator, $\mathcal{Q}^N_\epsilon(\widehat{\Omega}_g)\subseteq \widehat{\Omega}_g$. Thus, 
\[
\widehat{\Omega}_g
\subseteq
\mathcal{Q}^N_\epsilon(\widehat{\Omega}_g)
\subseteq
\widehat{\Omega}_g,
\]
which implies
\[
\widehat{\Omega}_g
=
\mathcal{Q}^N_\epsilon(\widehat{\Omega}_g).
\]
By the fixed-point characterization of \((N,\epsilon)\)-PCIS, this shows that
\(
\widehat{\Omega}_g
\)
is an \((N,\epsilon)\)-PCIS.
Hence, conditioned on
\(
\mathcal{D}^{\mathrm{grow}}
\),
the implication
\[
\widehat{\Omega}_g
\subseteq
\widetilde{\mathcal{Q}}^{N,\mathrm{cert}}_\epsilon(\widehat{\Omega}_g)
\ \Longrightarrow\
\widehat{\Omega}_g\ \text{is an }(N,\epsilon)\text{-PCIS}
\]
holds with probability at least \(\eta\) over
\(
\mathcal{D}^{\mathrm{cert}}
\).
\end{proof}

\subsection{A tractable lattice abstraction for operator evaluation}

When \(\mathcal{X}\) is infinite, evaluating \eqref{eq:Qtilde_set} over all \(x\in\mathcal{X}_S\) is intractable.
We therefore discretize the reference set.
For \(\delta_x>0\), define the lattice
\[
[\mathcal{X}_S]_{\delta_x}
=
\{x\in\mathcal{X}_S:\ x_i=a_i\delta_x,\ a_i\in\mathbb{Z},\ i=1,\ldots,n\}.
\]
Let
\(
q_{\delta_x}:\mathcal{X}_S\to [\mathcal{X}_S]_{\delta_x}
\)
be a measurable nearest-neighbor quantizer in \(\|\cdot\|_\infty\), i.e.,
\begin{equation}
q_{\delta_x}(x)\in
\argmin_{x_d\in[\mathcal{X}_S]_{\delta_x}}
\|x-x_d\|_\infty.
\label{eq:quantizer_nn}
\end{equation}
We assume that \([\mathcal{X}_S]_{\delta_x}\) forms a \(\delta_x\)-net of \(\mathcal{X}_S\), so that
\begin{equation}
\|x-q_{\delta_x}(x)\|_\infty\le \delta_x,\ \forall x\in\mathcal{X}_S.
\label{eq:delta_net}
\end{equation}
The following result shows that the lattice-based approximation remains conservative after accounting for the discretization error.
\begin{theorem}[Lattice-based conservative operator]\label{theorem:abstraction}
\normalfont
Let \cref{assumption:linear} hold and fix \(\Omega\subseteq\mathcal{X}_S\).
Use the independent datasets \(\mathcal{D}^{(j)}\) and the matrices
\(D_j,V_j,\sigma_j\) from \eqref{eq:DjVj}--\eqref{eq:sigmaj}.
Let \(\delta_j\in(0,1)\), \(j=0,\ldots,N-1\), satisfy
\(
\sum_{j=0}^{N-1}\delta_j \le 1-\eta
\).
For \(j=N\), define
\begin{equation}
\tilde p_N^{\delta_x}(x_d)=\1_\Omega(x_d),\ 
x_d\in[\mathcal{X}_S]_{\delta_x},
\label{eq:ptilde_dx_terminal}
\end{equation}
and its lift
\begin{equation}
\bar p_N^{\delta_x}(x)
=
\begin{cases}
\tilde p_N^{\delta_x}(q_{\delta_x}(x)), & x\in\Omega,\\
0, & x\notin\Omega.
\end{cases}
\label{eq:pbar_dx_terminal}
\end{equation}
For \(j=N-1,\ldots,0\), define the stage-\(j\) regression targets
\begin{align}
&y_t^{(j),\delta_x}
=
\bar p_{j+1}^{\delta_x}(x_{t+1}^{(j)}), \notag\\
&\hat\theta_j^{\delta_x}
=
V_j^{-1}D_j^\top y^{(j),\delta_x},
\label{eq:ridge_est_dx}
\end{align}
where
\(
y^{(j),\delta_x}
=
[y_0^{(j),\delta_x},\ldots,y_{T_j-1}^{(j),\delta_x}]^\top
\).
Assume that \(\beta_j=\beta_j(\delta_j)\) is chosen so that the event
\begin{equation}
\label{eq:Ej_dx}
\mathcal{E}_j^{\delta_x}
=
\Biggl\{
\begin{aligned}
&\forall (x,u)\in\mathcal{X}\times\mathcal{U}:\\
&\mathbb{E}[\bar p_{j+1}^{\delta_x}(x')\mid x,u] \ge \ell^{\delta_x}_j(x,u)
\end{aligned}
\Biggr\},
\end{equation}
where $\ell_j^{\delta_x}(x_d,u)
=
(\hat\theta_j^{\delta_x})^\top\phi(x_d,u)
-dL_\phi\delta_x
-\beta_j\sigma_j(x_d,u)$, satisfies
\begin{equation}
\Pr(\mathcal{E}_j^{\delta_x}\mid \bar p_{j+1}^{\delta_x})\ge 1-\delta_j.
\label{eq:Ej_dx_prob}
\end{equation}
Then define, for \(x_d\in[\mathcal{X}_S]_{\delta_x}\),
\begin{equation}
\tilde p_j^{\delta_x}(x_d)
=
\1_\Omega(x_d)\,
\max_{u\in\mathcal{U}}
\sat{\ell_j^{\delta_x}(x_d,u)},
\label{eq:ptilde_dx_rec}
\end{equation}
and its lift
\begin{equation}
\bar p_j^{\delta_x}(x)
=
\begin{cases}
\tilde p_j^{\delta_x}(q_{\delta_x}(x)), & x\in\Omega,\\
0, & x\notin\Omega.
\end{cases}
\label{eq:pbar_dx_rec}
\end{equation}
Finally, define
\begin{equation}
\widetilde{\mathcal{Q}}^N_{\delta_x,\epsilon}(\Omega)
=
\{x\in\Omega:\bar p_0^{\delta_x}(x)\ge 1-\epsilon\}.
\label{eq:Qtilde_dx}
\end{equation}
Then
\begin{equation}
\Pr \left(
\widetilde{\mathcal{Q}}^N_{\delta_x,\epsilon}(\Omega)
\subseteq
\mathcal{Q}^N_\epsilon(\Omega)
\right)
\ge \eta.
\label{eq:Qtilde_dx_conservative}
\end{equation}
\end{theorem}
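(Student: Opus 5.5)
The plan follows the proof of \cref{theorem:Qtilde}: backward induction on the intersection event, with one extra step that moves the lower bound from a lattice point to the continuous state it represents.

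\emph{Setup.} Set $\mathcal{E}^{\delta_x}=\bigcap_{j=0}^{N-1}\mathcal{E}_j^{\delta_x}$. The lifted function $\bar p_{j+1}^{\delta_x}$ is built only from the datasets $\mathcal{D}^{(j+1)},\ldots,\mathcal{D}^{(N-1)}$, so it is independent of $\mathcal{D}^{(j)}$. Hence \eqref{eq:Ej_dx_prob} holds conditionally at each stage, and the union bound gives $\Pr(\mathcal{E}^{\delta_x})\ge 1-\sum_j\delta_j\ge\eta$.

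\emph{Boundedness.} By backward induction, $\tilde p_j^{\delta_x}\in[0,1]$ because of the clipping in \eqref{eq:ptilde_dx_rec}. Therefore $\bar p_j^{\delta_x}\in[0,1]$ is measurable, since $q_{\delta_x}$ is measurable. So \cref{lemma:expectation} applies: there is $\theta_j$ with $\|\theta_j\|_\infty\le 1$ such that $\mathbb{E}[\bar p_{j+1}^{\delta_x}(x')\mid x,u]=\theta_j^\top\phi(x,u)$ for all $(x,u)$.

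\emph{Induction claim.} On $\mathcal{E}^{\delta_x}$ I will show
\[
\bar p_j^{\delta_x}(x)\le p_j^\Omega(x)\quad \forall x\in\mathcal{X},\ j=N,\ldots,0.
\]
\begin{itemize}
\item For $x\notin\Omega$ both sides are $0$.
\item At $j=N$ and $x\in\Omega$, we have $q_{\delta_x}(x)\in[\mathcal{X}_S]_{\delta_x}$, so $\bar p_N^{\delta_x}(x)=\1_\Omega(q_{\delta_x}(x))\le 1=p_N^\Omega(x)$.
\item For the inductive step, fix $x\in\Omega$ and $u$, and write $x_d=q_{\delta_x}(x)$. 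By the hypothesis at stage $j+1$ and monotonicity of expectation,
\[
\mathbb{E}[p_{j+1}^\Omega(x')\mid x,u]\ge \mathbb{E}[\bar p_{j+1}^{\delta_x}(x')\mid x,u].
\]
It then suffices to show $\mathbb{E}[\bar p_{j+1}^{\delta_x}(x')\mid x,u]\ge \ell_j^{\delta_x}(x_d,u)$.
\item Given that bound, take $\max_u$, clip to $[0,1]$, and use $p_j^\Omega(x)=\max_u\mathbb{E}[p_{j+1}^\Omega(x')\mid x,u]\in[0,1]$. Clipping is harmless because the target already lies in $[0,1]$. This yields $p_j^\Omega(x)\ge\tilde p_j^{\delta_x}(x_d)=\bar p_j^{\delta_x}(x)$.
\end{itemize}

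\emph{Main obstacle: transfer from $x_d$ to $x$.} The lower bound $\ell_j^{\delta_x}$ is evaluated at $x_d$, but the DP value is needed at $x$. This is where the term $dL_\phi\delta_x$ enters. The self-normalized confidence bound naturally controls the conditional expectation at the lattice point:
\[
\mathbb{E}[\bar p_{j+1}^{\delta_x}(x')\mid x_d,u]=\theta_j^\top\phi(x_d,u)\ge(\hat\theta_j^{\delta_x})^\top\phi(x_d,u)-\beta_j\sigma_j(x_d,u).
\]
Using \eqref{eq:phi_lipschitz}, \eqref{eq:delta_net} and $|\theta_{j,\ell}|\le 1$,
\[
|\theta_j^\top(\phi(x,u)-\phi(x_d,u))|\le\sum_{\ell=1}^d|\theta_{j,\ell}|\,L_\phi\|x-x_d\|_\infty\le dL_\phi\delta_x .
\]
Hence $\mathbb{E}[\bar p_{j+1}^{\delta_x}(x')\mid x,u]\ge\ell_j^{\delta_x}(x_d,u)$, which is exactly what the inductive step needs. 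I read the event $\mathcal{E}_j^{\delta_x}$ as encoding this combined statement, i.e.\ the bound at the state $x$ with $\ell_j^{\delta_x}$ evaluated at $q_{\delta_x}(x)$. I would state this reading explicitly and check that it is the version the confidence-bound argument delivers.

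\emph{Conclusion.} Setting $j=0$, on $\mathcal{E}^{\delta_x}$ we get $\bar p_0^{\delta_x}\le p_0^\Omega$. So every $x\in\widetilde{\mathcal{Q}}^N_{\delta_x,\epsilon}(\Omega)$ satisfies $p_0^\Omega(x)\ge1-\epsilon$, hence lies in $\mathcal{Q}^N_\epsilon(\Omega)$ by \eqref{eq:Q_exact_dp}. Since $\Pr(\mathcal{E}^{\delta_x})\ge\eta$, this gives \eqref{eq:Qtilde_dx_conservative}.
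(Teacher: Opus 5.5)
Your proposal is correct and follows the paper's proof essentially step for step. It uses independence of $\bar p_{j+1}^{\delta_x}$ from $\mathcal{D}^{(j)}$ with the union bound, then boundedness plus \cref{lemma:expectation} to get $\|\theta_j\|_\infty\le 1$, and then a backward induction whose key step combines the confidence bound at $x_d=q_{\delta_x}(x)$ with a Lipschitz transfer costing $dL_\phi\delta_x$.

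The interpretive issue you flag is real, and the paper resolves it the same way. Its proof also treats $\mathcal{E}_j^{\delta_x}$ as supplying the uncorrected bound $(\theta_j^{\delta_x})^\top\phi(x_d,u)\ge(\hat\theta_j^{\delta_x})^\top\phi(x_d,u)-\beta_j\sigma_j(x_d,u)$ at lattice points. Read literally, with $\ell_j^{\delta_x}$ evaluated at $x$ and already containing $-dL_\phi\delta_x$, the event would only give the needed inequality with $2dL_\phi\delta_x$. So stating your reading explicitly is exactly the right fix.
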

For the proof, see Appendix. 

\subsection{Set-valued safe action maps}

For a fixed \(\Omega\), the conservative backward recursion can naturally induce a stage-dependent family of set-valued safe action maps.
When \(N>1\), this object should be viewed as a set-valued map on the augmented state \((j,x)\), rather than as a single stationary policy.

For a given \(\Omega\subseteq\mathcal{X}_S\), let \(\tilde p_j\) be computed as in \cref{theorem:Qtilde}.
For each stage \(j=0,\dots,N-1\) and \(x\in\Omega\), define
\begin{equation}
\mu_j^S(x;\Omega)
=
\bigl\{u\in\mathcal{U}:\ell_j(x,u)\ge 1-\epsilon\bigr\}.
\label{eq:setvalued_policy}
\end{equation}
Equivalently, we may regard the set-valued safe action map as the augmented-state map
\[
\mu^S(j,x;\Omega)=\mu_j^S(x;\Omega),\ (j,x)\in \{0,\dots,N-1\}\times \Omega.
\]

Similarly, in the lattice-based case, for each stage \(j=0,\dots,N-1\), \(x\in\Omega\), and \(x_d=q_{\delta_x}(x)\), define
\begin{equation}
\mu_j^{S,\delta_x}(x;\Omega)
=
\bigl\{u\in\mathcal{U}:\ell_j^{\delta_x}(x_d,u)\ge 1-\epsilon\bigr\}.
\label{eq:setvalued_policy_dx}
\end{equation}

\begin{corollary}[Safe actions for a fixed reference set]\label{cor:setvalued}
\normalfont
On the confidence event of \cref{theorem:Qtilde}, every action \(u\in\mu_j^S(x;\Omega)\) satisfies
\begin{equation}
\mathbb{E}[p_{j+1}^\Omega(x')\mid x,u]\ge 1-\epsilon.
\label{eq:safe_action_cert}
\end{equation}
Define the stage-\(j\) conservative predecessor set by
\[
\widetilde{\mathcal{Q}}_{\epsilon}^{N,j}(\Omega)
:=
\{x\in\Omega:\tilde p_j(x)\ge 1-\epsilon\},
\ \ j=0,\dots,N-1.
\]
Then, if \(x\in\widetilde{\mathcal{Q}}_{\epsilon}^{N,j}(\Omega)\), we have \(\mu_j^S(x;\Omega)\neq\varnothing\).
In particular,
\[
\widetilde{\mathcal{Q}}_{\epsilon}^{N,0}(\Omega)
=
\widetilde{\mathcal{Q}}_\epsilon^N(\Omega),
\]
so \(x\in\widetilde{\mathcal{Q}}_\epsilon^N(\Omega)\) implies \(\mu_0^S(x;\Omega)\neq\varnothing\).
The same statement holds for the lattice-based action sets in \eqref{eq:setvalued_policy_dx} under the event of \cref{theorem:abstraction}.
\end{corollary}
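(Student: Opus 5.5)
The argument works on the confidence event $\mathcal{E}=\bigcap_{j=0}^{N-1}\mathcal{E}_j$ from the proof of \cref{theorem:Qtilde}. On this event, the backward induction there gives $\tilde p_{j}\le p_{j}^\Omega$ pointwise for all $j=0,\dots,N$. For the first claim, fix $j\in\{0,\dots,N-1\}$, $x\in\Omega$ and $u\in\mu_j^S(x;\Omega)$. I chain three inequalities:
\[
\mathbb{E}[p_{j+1}^\Omega(x')\mid x,u]\ \ge\ \mathbb{E}[\tilde p_{j+1}(x')\mid x,u]\ \ge\ \ell_j(x,u)\ \ge\ 1-\epsilon .
\]
The first uses $\tilde p_{j+1}\le p_{j+1}^\Omega$ and monotonicity of conditional expectation. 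The second is exactly $\mathcal{E}_j$. The third is the definition \eqref{eq:setvalued_policy}. This gives \eqref{eq:safe_action_cert}.

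For nonemptiness, take $x\in\widetilde{\mathcal{Q}}^{N,j}_\epsilon(\Omega)$, so $x\in\Omega$ and $\tilde p_j(x)=\max_{u}\min\{1,[\ell_j(x,u)]_+\}\ge 1-\epsilon>0$. Since $\mathcal{U}$ is finite, the maximum is attained at some $u^\star$, and $\min\{1,[\ell_j(x,u^\star)]_+\}\ge 1-\epsilon$. This clipping step is the only delicate point. Because $1-\epsilon>0$, the positive part is not active, so $[\ell_j(x,u^\star)]_+=\ell_j(x,u^\star)$. The cap at $1$ only decreases the value, so $\ell_j(x,u^\star)\ge\min\{1,\ell_j(x,u^\star)\}\ge 1-\epsilon$. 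Hence $u^\star\in\mu_j^S(x;\Omega)$. This step is purely deterministic and does not need the confidence event.

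The identity $\widetilde{\mathcal{Q}}^{N,0}_\epsilon(\Omega)=\widetilde{\mathcal{Q}}^N_\epsilon(\Omega)$ follows directly by comparing the definitions with \eqref{eq:Qtilde_set}, and the "in particular" claim is the case $j=0$.

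For the lattice version, I repeat the same steps on $\bigcap_j\mathcal{E}_j^{\delta_x}$. Set $x_d=q_{\delta_x}(x)$, and use the inequality $\bar p_{j+1}^{\delta_x}\le p_{j+1}^\Omega$ established in the proof of \cref{theorem:abstraction}. The main obstacle is the evaluation point: $\ell_j^{\delta_x}$ is evaluated at $x_d$, while the expectation is taken at $x$, and $\mathcal{E}_j^{\delta_x}$ as written evaluates $\ell_j^{\delta_x}$ at the same point as the expectation. I will bridge the two with the Lipschitz bound \eqref{eq:phi_lipschitz} and the identity $\mathbb{E}[\bar p_{j+1}^{\delta_x}(x')\mid x,u]=\theta^\top\phi(x,u)$ from \cref{lemma:expectation}, where $|\theta_\ell|\le 1$. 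Together with \eqref{eq:delta_net}, these give
\[
|\theta^\top(\phi(x,u)-\phi(x_d,u))|\le dL_\phi\delta_x .
\]
This margin is exactly the term $-dL_\phi\delta_x$ built into $\ell_j^{\delta_x}$. I expect the appendix proof of \cref{theorem:abstraction} to interpret the event this way, with the confidence bound taken at $x_d$ and then transferred to $x$. I will invoke it in that form to obtain $\mathbb{E}[\bar p_{j+1}^{\delta_x}(x')\mid x,u]\ge\ell_j^{\delta_x}(x_d,u)\ge 1-\epsilon$ for $x\in\Omega$. The nonemptiness argument for the lattice action sets is then verbatim the same as above, with $\tilde p_j^{\delta_x}(x_d)$ in place of $\tilde p_j(x)$.
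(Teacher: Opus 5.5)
Your proposal is correct and follows the paper's route. You use the three-inequality chain on the confidence event for the safe-action claim, attainment of the maximum over the finite $\mathcal{U}$ for nonemptiness, and the lower bound \eqref{eq:barp_lower_dx} from the appendix (confidence bound at $x_d$, transferred to $x$ through the $dL_\phi\delta_x$ Lipschitz margin) for the lattice case. Your explicit handling of the clipping via $1-\epsilon>0$ and of the evaluation-point mismatch in $\mathcal{E}_j^{\delta_x}$ simply spells out what the paper's terser proof leaves implicit.
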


\begin{proof}
If \(u\in\mu_j^S(x;\Omega)\), then by definition
\[
\hat\theta_j^\top\phi(x,u)-\beta_j\sigma_j(x,u)\ge 1-\epsilon.
\]
On the confidence event of \cref{theorem:Qtilde},
\[
\mathbb{E}[\tilde p_{j+1}(x')\mid x,u]
\ge
\hat\theta_j^\top\phi(x,u)-\beta_j\sigma_j(x,u)
\ge 1-\epsilon.
\]
Since \(\tilde p_{j+1}\le p_{j+1}^\Omega\) on the same event, \eqref{eq:safe_action_cert} follows.
Now, let \(x\in\widetilde{\mathcal{Q}}_{\epsilon}^{N,j}(\Omega)\), so that \(\tilde p_j(x)\ge 1-\epsilon\).
By definition of \(\tilde p_j\),
\[
\tilde p_j(x)
=
\max_{u\in\mathcal U}\sat{\ell_j(x,u)},
\]
hence at least one action attains a value no smaller than \(1-\epsilon\), and therefore \(\mu_j^S(x;\Omega)\neq\varnothing\). The lattice-based case is identical, with \(\tilde p_{j+1}\) replaced by \(\bar p_{j+1}^{\delta_x}\) and \cref{theorem:abstraction} used in place of \cref{theorem:Qtilde}.
\end{proof}

\begin{remark}[Interpretation for \(N>1\)]\label{rem:setvalued}
\normalfont
For \(N=1\), the set-valued map \(\mu_0^S(\cdot;\Omega)\) can be used directly as a one-step safe shield.
For \(N>1\), the family \(\{\mu_j^S(\cdot;\Omega)\}_{j=0}^{N-1}\) is stage-dependent and should be interpreted on the augmented state \((j,x)\).
Membership \(u\in\mu_j^S(x;\Omega)\) certifies that taking action \(u\) at stage \(j\) is safe relative to the continuation value \(p_{j+1}^\Omega\), i.e., relative to the existence of a compatible continuation over the remaining \(N-j-1\) stages.
Accordingly, for general \(N>1\), the family \(\{\mu_j^S\}_{j=0}^{N-1}\) should be used together with a stage index and a compatible continuation selector extracted from the backward recursion, for example
\[
u_j^\star(x)\in\argmax_{u\in\mathcal U}\sat{\ell_j(x,u)}.
\]
The present corollary does not by itself imply that arbitrary independent selectors from the thresholded sets \(\mu_j^S(\cdot;\Omega)\) can be freely composed across stages while preserving the same \(N\)-step guarantee.
\end{remark}
\section{Application to Safe Reinforcement Learning via Shielding}
\label{sec:shielding}

We now use the safe action maps induced by conservative backward recursions as a runtime \emph{shield}
around an arbitrary RL learner (e.g., \cite{alshiekh2018safe}).
The overview of the shielding mechanism is illustrated in Fig.~\ref{fig:overview}.
The shield is external to the RL update rule: it filters the action generated by the RL learner,
while the underlying learning algorithm can be chosen freely.
Let \(\mu_{\mathrm{RL}}\) denote the \emph{proposal} policy produced by the RL learner.

Because the executed action is filtered by the shield, the relevant performance objective is the
discounted return of the resulting shielded closed loop:
\begin{equation}
J_{\mathrm{sh}}(\mu_{\mathrm{RL}})
=
\mathbb{E}\left[\sum_{t=0}^{\infty}\gamma^t r_t\right],
\label{eq:RL_obj}
\end{equation}
where \(\gamma\in(0,1)\) and the expectation is taken under the shielded execution policy and the environment dynamics.

For notational simplicity, the online shielding algorithms below are written for the case \(N=1\).
For general \(N>1\), the stage-dependent safe action maps
\(
\{\mu_j^S(\cdot;\Omega)\}_{j=0}^{N-1}
\)
are applied in a receding-horizon manner, exactly as discussed in \cref{rem:setvalued}.

\begin{assumption}[Initial shield set]
\label{assumption:init_pcis}
\normalfont
There exists a nonempty initial shield set
\(
\widehat{\Omega}_0\subseteq\mathcal{X}_S
\)
together with a corresponding initial set-valued safe action map
\(
\mu_0^S(\cdot;\widehat{\Omega}_0)
\)
such that, whenever \(x_t\in\widehat{\Omega}_0\) and \(u_t\in\mu_0^S(x_t;\widehat{\Omega}_0)\),
the next state satisfies \(x_{t+1}\in \widehat{\Omega}_0\) with probability at least \(1-\epsilon\).
\end{assumption}

\begin{figure}[t]
\centering
\includegraphics[width=0.99\linewidth]{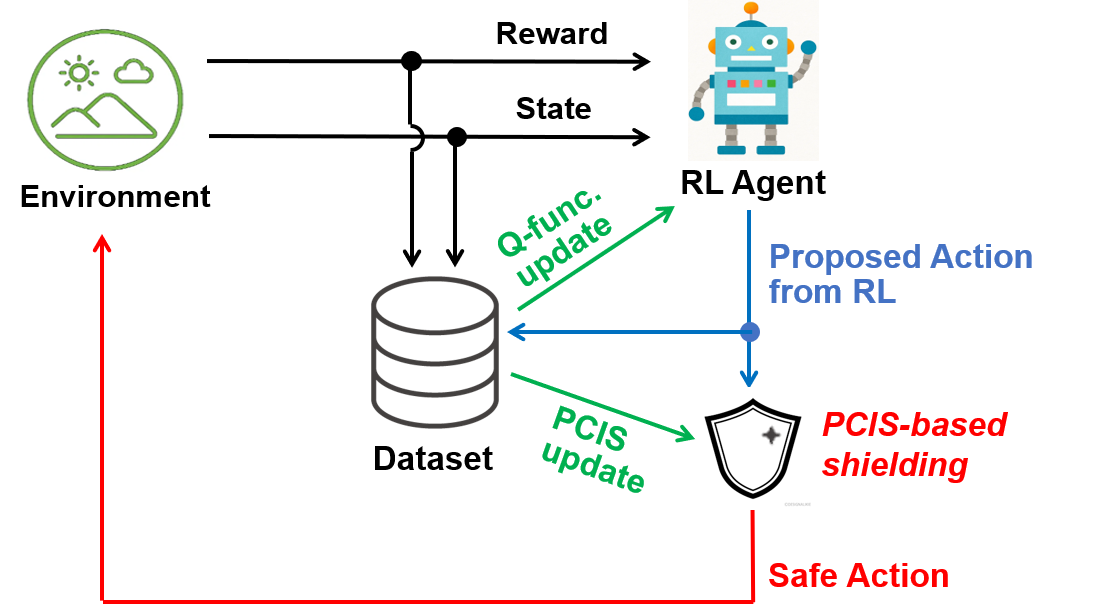}
\caption{Overview of safe RL via PCIS-based shielding with a grow/certify split.}
\label{fig:overview}
\end{figure}

At shielding update \(i\), let
\[
S_i(x)=\mu_i^S(x),\ \ x\in\widehat{\Omega}_i,
\]
where \((\widehat{\Omega}_i,\mu_i^S)\) denotes the \emph{current accepted shield}.
Given a proposal action \(u_t^{\mathrm{RL}}\), the executed action is defined by
\begin{equation}
u_t=
\begin{cases}
u_t^{\mathrm{RL}}, & \text{if } u_t^{\mathrm{RL}}\in S_i(x_t),\\[0.5ex]
\mathsf{backup}(x_t,S_i(x_t)), & \text{otherwise},
\end{cases}
\label{eq:shield_filter}
\end{equation}
where \(\mathsf{backup}(x,S)\in S\) is any selector.
For discrete action spaces, a natural choice is
\[
\mathsf{backup}(x,S)\in\argmax_{u\in S}Q_\theta(x,u),
\]
where \(Q_\theta\) is the current action-value estimate of the RL learner.
If \(\mathcal{U}\subseteq\mathbb{R}^m\), one may alternatively use a nearest-safe-action projection.
In continuous-state environments, both membership in \(\widehat{\Omega}_i\) and the safe action set \(S_i(x)\)
are evaluated via the lattice representation and the quantizer \(q_{\delta_x}\).
The filter \eqref{eq:shield_filter} is applied only when \(x_t\in\widehat{\Omega}_i\);
if \(x_t\notin\widehat{\Omega}_i\), the rollout is terminated or reset according to the task specification. To align the online shielding procedure with the sample-splitting certification result of the previous section,
we distinguish between two types of data at each shield update \(i\):

\begin{itemize}
\item a \emph{grow dataset} \(\mathcal{D}^{\mathrm{grow}}\), used to construct a tentative conservative fixed point, and
\item a fresh \emph{certification dataset} \(\mathcal{D}_i^{\mathrm{cert}}\), used only to certify that tentative set.
\end{itemize}

More precisely, the grow phase produces a tentative set
\(
\Omega_i^{\mathrm{tent}}
\subseteq \mathcal{X}_S
\)
by running the fixed-point search \(\textsc{ConInv}\) on \(\mathcal{D}^{\mathrm{grow}}\).
This tentative set is \emph{not} deployed immediately.
Instead, a separate certification step evaluates the conservative operator only once on the \emph{fixed}
reference set \(\Omega_i^{\mathrm{tent}}\) using \(\mathcal{D}_i^{\mathrm{cert}}\).
If the certification inclusion test succeeds, then \(\Omega_i^{\mathrm{tent}}\) is accepted as the next shield set.
In that case, the safe action map actually deployed online is the one extracted from the certification recursion.
Algorithm~\ref{alg:generic_shield} summarizes this learner-agnostic wrapper.

\renewcommand{\algorithmicrequire}{\textbf{Input:}}
\renewcommand{\algorithmicensure}{\textbf{Output:}}
\begin{algorithm}[t]
\caption{Safe RL via PCIS-based shielding with grow/certify splitting (generic template)}
\label{alg:generic_shield}
\begin{algorithmic}[1]
{\small
\REQUIRE{Initial shield \((\widehat{\Omega}_0,\mu_0^S)\), grow rollout length \(T_{\rm grow}\), certification rollout length \(T_{\rm cert}\), safe set \(\mathcal{X}_S\), certification data-collection protocol \(\beta_{\rm cert}\), backward-recursion parameters, and RL hyperparameters.}
\ENSURE{Learned RL parameters and current accepted shield \((\widehat{\Omega},\mu^S)\).}

\STATE Initialize RL parameters and accumulated grow dataset \(\mathcal{D}^{\mathrm{grow}}\leftarrow\varnothing\)
\STATE Set \(\widehat{\Omega}_i\leftarrow \widehat{\Omega}_0\), \(\mu_i^S\leftarrow \mu_0^S\), \(i\leftarrow 0\)

\REPEAT
  \STATE Start from the current environment state (or reset if the task is episodic)

  \FOR{$t=0$ \TO $T_{\rm grow}-1$}
    \STATE Propose \(u_t^{\mathrm{RL}}\) using the current RL learner
    \STATE Execute the shielded action \(u_t\) via \eqref{eq:shield_filter}
    \STATE Observe \(x_{t+1}\) and reward \(r_t\)
    \STATE Append \((x_t,u_t,x_{t+1})\) to \(\mathcal{D}^{\mathrm{grow}}\)
    \STATE Update the RL parameters using the chosen RL algorithm
    \STATE \(x_t\leftarrow x_{t+1}\)
  \ENDFOR

  \STATE \((\Omega_i^{\mathrm{tent}},\mu_i^{S,\mathrm{grow}})\leftarrow \textsc{ConInv}(\mathcal{D}^{\mathrm{grow}},\mathcal{X}_S)\)

  \STATE Collect a fresh certification dataset \(\mathcal{D}_i^{\mathrm{cert}}\) of length \(T_{\rm cert}\) using the certification protocol \(S_{\rm cert}\) from fresh resets
  \STATE Do \emph{not} use \(\mathcal{D}_i^{\mathrm{cert}}\) for RL updates or for the grow-phase fixed-point search

  \STATE \((a_i,\mu_i^{S,\mathrm{cert}})\leftarrow \textsc{CertifyShield}(\mathcal{D}_i^{\mathrm{cert}},\Omega_i^{\mathrm{tent}})\)

  \IF{$a_i=1$ \textbf{and} \(\widehat{\Omega}_i\subseteq \Omega_i^{\mathrm{tent}}\)}
    \STATE \(\widehat{\Omega}_{i+1}\leftarrow \Omega_i^{\mathrm{tent}}\)
    \STATE \(\mu_{i+1}^S\leftarrow \mu_i^{S,\mathrm{cert}}\)
  \ELSE
    \STATE \(\widehat{\Omega}_{i+1}\leftarrow \widehat{\Omega}_i\)
    \STATE \(\mu_{i+1}^S\leftarrow \mu_i^S\)
  \ENDIF

  \STATE \(i\leftarrow i+1\)
\UNTIL{stopping criterion is met}
}
\end{algorithmic}
\end{algorithm}

Algorithm~\ref{alg:shielded_dqn} instantiates Algorithm~\ref{alg:generic_shield} with DQN.
The proposal action is chosen by \(\varepsilon\)-greedy exploration with respect to the online network \(Q_\theta\),
while the replay buffer stores the \emph{executed} shielded transitions.
Hence, the Q-network is trained on the behavior actually induced by the current accepted shield.
The certification data are kept separate from the replay buffer and are never used for Q-learning updates.

\renewcommand{\algorithmicrequire}{\textbf{Input:}}
\renewcommand{\algorithmicensure}{\textbf{Output:}}
\begin{algorithm}[t]
\caption{Safe RL via PCIS-based shielding with grow/certify splitting (DQN-style example)}
\label{alg:shielded_dqn}
\begin{algorithmic}[1]
{\small
\REQUIRE{Initial shield \((\widehat{\Omega}_0,\mu_0^S)\), grow rollout length \(T_{\rm grow}\), certification rollout length \(T_{\rm cert}\), certification data-collection policy \(\beta_{\rm cert}\), safe set \(\mathcal{X}_S\), replay-buffer capacity \(M\), minibatch size \(B\), discount \(\gamma\), target-network update interval \(K\), and exploration schedule \(\varepsilon(\cdot)\).}
\ENSURE{Learned proposal policy \(\mu_{\rm RL}\) and current accepted shield \((\widehat{\Omega},\mu^S)\).}

\STATE Initialize online network \(Q_\theta\), target network \(Q_{\bar\theta}\leftarrow Q_\theta\)
\STATE Initialize replay buffer \(\mathcal{B}\leftarrow\varnothing\)
\STATE Initialize accumulated grow dataset \(\mathcal{D}^{\mathrm{grow}}\leftarrow\varnothing\)
\STATE Set \(\widehat{\Omega}_i\leftarrow \widehat{\Omega}_0\), \(\mu_i^S\leftarrow \mu_0^S\), \(i\leftarrow 0\)

\REPEAT
  \FOR{$t=0$ \TO $T_{\rm grow}-1$}
    \STATE Choose \(u_t^{\mathrm{RL}}\) from \(Q_\theta\) using \(\varepsilon(t)\)-greedy exploration
    \STATE Execute the shielded action \(u_t\) via \eqref{eq:shield_filter}
    \STATE Observe \(x_{t+1}\), reward \(r_t\), and terminal flag \(d_t\in\{0,1\}\)
    \STATE Store \((x_t,u_t,r_t,x_{t+1},d_t)\) in \(\mathcal{B}\)
    \STATE Append \((x_t,u_t,x_{t+1})\) to \(\mathcal{D}^{\mathrm{grow}}\)
    \IF{$|\mathcal{B}|\ge B$}
      \STATE Sample a minibatch from \(\mathcal{B}\)
      \STATE Perform one DQN update step
      \STATE Every \(K\) steps, update the target network \(Q_{\bar\theta}\)
    \ENDIF
    \STATE \(x_t\leftarrow x_{t+1}\)
  \ENDFOR

  \STATE \((\Omega_i^{\mathrm{tent}},\mu_i^{S,\mathrm{grow}})\leftarrow \textsc{ConInv}(\mathcal{D}^{\mathrm{grow}},\mathcal{X}_S)\)

  \STATE Collect a fresh certification dataset \(\mathcal{D}_i^{\mathrm{cert}}\) of length \(T_{\rm cert}\) using the fixed certification policy \(S_{\rm cert}\) from fresh resets
  \STATE Do \emph{not} insert \(\mathcal{D}_i^{\mathrm{cert}}\) into \(\mathcal{B}\), and do \emph{not} update \(Q_\theta\) with these samples

  \STATE \((a_i,\mu_i^{S,\mathrm{cert}})\leftarrow \textsc{CertifyShield}(\mathcal{D}_i^{\mathrm{cert}},\Omega_i^{\mathrm{tent}})\)

  \IF{$a_i=1$ \textbf{and} \(\widehat{\Omega}_i\subseteq \Omega_i^{\mathrm{tent}}\)}
    \STATE \(\widehat{\Omega}_{i+1}\leftarrow \Omega_i^{\mathrm{tent}}\)
    \STATE \(\mu_{i+1}^S\leftarrow \mu_i^{S,\mathrm{cert}}\)
  \ELSE
    \STATE \(\widehat{\Omega}_{i+1}\leftarrow \widehat{\Omega}_i\)
    \STATE \(\mu_{i+1}^S\leftarrow \mu_i^S\)
  \ENDIF

  \STATE \(i\leftarrow i+1\)
\UNTIL{stopping criterion is met}
}
\end{algorithmic}
\end{algorithm}

Algorithm~\ref{alg:coninv} details the grow-phase subroutine \(\textsc{ConInv}\).
Starting from \(\mathcal{X}_S\), it repeatedly applies the conservative operator approximation until a fixed point is reached.
The resulting fixed point is used only as a \emph{tentative} shield set.
In the continuous-state case, the operator is evaluated on the lattice abstraction from \cref{theorem:abstraction},
and the grow-phase safe action map returned at termination is extracted from the final backward recursion.

\renewcommand{\algorithmicrequire}{\textbf{Input:}}
\renewcommand{\algorithmicensure}{\textbf{Output:}}
\begin{algorithm}[t]
\caption{\(\textsc{ConInv}(\mathcal{D}^{\mathrm{grow}},\mathcal{X}_S)\): tentative shield construction}
\label{alg:coninv}
\begin{algorithmic}[1]
{\small
\REQUIRE{Accumulated grow dataset \(\mathcal{D}^{\mathrm{grow}}\), safe set \(\mathcal{X}_S\)}
\ENSURE{Tentative shield set \(\Omega^{\mathrm{tent}}\) and grow-phase safe action map \(\mu^{S,\mathrm{grow}}\)}

\STATE If \(\mathcal{X}\) is continuous, choose \(\delta_x\), construct the lattice \([\mathcal{X}_S]_{\delta_x}\), and evaluate the backward recursion on the lattice; see \cref{theorem:abstraction}
\STATE Initialize \(\Omega^{(0)}\leftarrow \mathcal{X}_S\)
\REPEAT
  \STATE Compute a conservative operator approximation
  \(\widetilde{\mathcal{Q}}_\epsilon^N(\Omega^{(\ell)})\)
  (or \(\widetilde{\mathcal{Q}}_{\delta_x,\epsilon}^N(\Omega^{(\ell)})\) in the lattice case)
  from \(\mathcal{D}^{\mathrm{grow}}\) using the backward recursion of \cref{theorem:Qtilde,theorem:abstraction}
  \STATE Update \(\Omega^{(\ell+1)}\leftarrow \widetilde{\mathcal{Q}}_\epsilon^N(\Omega^{(\ell)})\) (or its lattice counterpart)
\UNTIL{\(\Omega^{(\ell+1)}=\Omega^{(\ell)}\)}
\STATE Define \(\Omega^{\mathrm{tent}}\leftarrow \Omega^{(\ell)}\)
\STATE Extract the accompanying grow-phase safe action map \(\mu^{S,\mathrm{grow}}\) from the final backward recursion
\STATE Return \((\Omega^{\mathrm{tent}},\mu^{S,\mathrm{grow}})\)
}
\end{algorithmic}
\end{algorithm}

Algorithm~\ref{alg:certify_shield} describes the certification step.
Unlike \(\textsc{ConInv}\), this routine does \emph{not} run a fixed-point search.
Instead, it evaluates the conservative operator only once on the fixed tentative reference set \(\Omega^{\mathrm{tent}}\) using the certification data.
If the inclusion test
\(
\Omega^{\mathrm{tent}}\subseteq \widetilde{\mathcal{Q}}_{\epsilon}^{N,\mathrm{cert}}(\Omega^{\mathrm{tent}})
\)
passes, then, by the sample-splitting certification result of the previous section,
the tentative set is an \((N,\epsilon)\)-PCIS with confidence at least \(\eta\).
The safe action map deployed after acceptance is the one extracted from this certification recursion.

\renewcommand{\algorithmicrequire}{\textbf{Input:}}
\renewcommand{\algorithmicensure}{\textbf{Output:}}
\begin{algorithm}[t]
\caption{\(\textsc{CertifyShield}(\mathcal{D}^{\mathrm{cert}},\Omega^{\mathrm{tent}})\): hold-out certification of a tentative shield}
\label{alg:certify_shield}
\begin{algorithmic}[1]
{\small
\REQUIRE{Certification dataset \(\mathcal{D}^{\mathrm{cert}}\), tentative shield set \(\Omega^{\mathrm{tent}}\subseteq\mathcal{X}_S\)}
\ENSURE{Acceptance flag \(a\in\{0,1\}\) and certification safe action map \(\mu^{S,\mathrm{cert}}\)}

\STATE If \(\mathcal{X}\) is continuous, choose \(\delta_x\), construct the lattice representation of \(\Omega^{\mathrm{tent}}\), and evaluate the backward recursion on the corresponding lattice
\STATE Compute the certification operator evaluation
\[
\Omega^{\mathrm{cert}}
\leftarrow
\widetilde{\mathcal{Q}}_{\epsilon}^{N,\mathrm{cert}}(\Omega^{\mathrm{tent}})
\]
(or its lattice counterpart)
from \(\mathcal{D}^{\mathrm{cert}}\) using the backward recursion of \cref{theorem:Qtilde,theorem:abstraction} with the \emph{fixed} reference set \(\Omega^{\mathrm{tent}}\)
\STATE Extract the accompanying certification safe action map \(\mu^{S,\mathrm{cert}}\) from the same backward recursion
\IF{\(\Omega^{\mathrm{tent}}\subseteq \Omega^{\mathrm{cert}}\)}
  \STATE Set \(a\leftarrow 1\)
\ELSE
  \STATE Set \(a\leftarrow 0\)
\ENDIF
\STATE Return \((a,\mu^{S,\mathrm{cert}})\)
}
\end{algorithmic}
\end{algorithm}

\begin{remark}[How to enforce independence in practice]
\label{rem:shield_independence}
\normalfont
For the sample-splitting certification argument to apply literally, the grow and certification
datasets should be generated by independent data-collection procedures.
This can be implemented either sequentially or in parallel.
In particular, it is acceptable to run a separate certification episode (or a separate certification
worker) in parallel with the grow rollout, provided that the certification stream is generated from
an independent environment instance, with fresh resets and an independent random seed, under a
fixed certification policy \(S_{\mathrm{cert}}\).
The resulting certification data must not be used for RL updates or for the grow-phase fixed-point
search.
What matters is probabilistic separation rather than temporal separation. Thus, simply labeling one of two interacting rollouts as ``certification'' is not sufficient if the
certification policy is adapted using grow-phase quantities (such as the current learner policy,
current shield, or replay buffer contents) during data collection.
In that case, the simple independence argument no longer applies directly.
If certification is repeated at multiple shield updates, one should use either a fresh certification
batch at each update or a family of pre-allocated disjoint hold-out batches.
Repeated reuse of the same certification data across adaptively selected tentative shields would
require an additional multiple-testing or uniform-certification argument.
\end{remark}

\begin{remark}[Monotone shield update]
\label{rem:certified_monotone}
\normalfont
The inclusion test
\[
\widehat{\Omega}_i\subseteq \Omega_i^{\mathrm{tent}}
\]
in Algorithms~\ref{alg:generic_shield} and \ref{alg:shielded_dqn} is an optional pragmatic safeguard
that prevents the accepted shield from shrinking because of transient estimation noise.
If such monotonicity is not desired, one may update the current shield to
\((\Omega_i^{\mathrm{tent}},\mu_i^{S,\mathrm{cert}})\) whenever the certification step accepts.
\end{remark}


\section{Experiments}\label{sec:experiments}

We present a proof-of-concept study on a modified MountainCar benchmark.
The objective of this section is to evaluate whether the proposed PCIS-based shield can suppress unsafe exploration while preserving learning performance.

\subsection{Simulation settings}
We consider a modified MountainCar environment with continuous state
\(
\bm{s}=(x,v)\in\mathbb{R}^2
\),
where \(x\) and \(v\) denote position and velocity, and discrete action set
\(
\mathcal{U}=\{0,1,2\}
\)
corresponding to left thrust, no thrust, and right thrust.
The dynamics are
\begin{align}
v_{t+1} &= v_t + 10^{-3}(u_t-1) - 2.5\times 10^{-3}\cos(3x_t), \notag\\
x_{t+1} &= x_t + v_{t+1}.
\label{eq:mc_dynamics}
\end{align}
Unlike the standard clipped MountainCar benchmark, the usual environment-side boundary clipping/reset is disabled here so that safety violations are directly observable as excursions outside the prescribed safe set.
In other words, \eqref{eq:mc_dynamics} should be interpreted as the dynamics of an unclipped MountainCar variant rather than those of the standard benchmark.
The goal region is reached when \(x_t\ge 0.5\) and \(v_t\ge 0\), and the reward is \(r_t=-1\) at non-goal states and \(r_t=0\) at the goal.
The safe set is defined as
\begin{equation}
\mathcal{X}_S
=
[-1.5,0.6]\times[-0.07,0.07].
\label{eq:safe_set_exp}
\end{equation}
The initial state distribution follows the standard MountainCar initialization,
\(
x_0\sim U([-0.6,-0.4])
\)
with \(v_0=0\), although the subsequent dynamics are those of the modified unclipped variant described above. We discretize \(\mathcal{X}_S\) using a Cartesian grid with \(200\) position points and \(30\) velocity points.
Hence, the lattice spacings are approximately
\(
\Delta_x = 2.1/199 \approx 1.06\times 10^{-2}
\)
and
\(
\Delta_v = 0.14/29 \approx 4.83\times 10^{-3}
\).
Define
\begin{align}
\bar x = \frac{x+1.5}{2.1},\ 
\bar v = \frac{v+0.07}{0.14},
\label{eq:state_norm}
\end{align}
and let \(\bar{\bm{s}}=[\bar x,\bar v]^\top\).
With
\(
\mathcal{C}=\{0,1,2,3,4,5\}^2
\),
the state-only Fourier feature is
\begin{equation}
\psi(\bm{s})
=
\bigl[\cos(\pi\,\bm{c}^\top \bar{\bm{s}})\bigr]_{\bm{c}\in\mathcal{C}}
\in \mathbb{R}^{|\mathcal{C}|}.
\label{eq:fourier_basis}
\end{equation}
Fourier features are a standard choice for linear function approximation in reinforcement learning (see, e.g., \cite{lucas}).
Let \(e_u\in\{0,1\}^{|\mathcal{U}|}\) denote the one-hot vector corresponding to action \(u\in\mathcal{U}\).
The state-action feature used in the PCIS computation is then defined by
\begin{equation}
\phi(\bm{s},u)
=
e_u \otimes \psi(\bm{s})
\in
\mathbb{R}^{|\mathcal{U}||\mathcal{C}|},
\label{eq:sa_feature}
\end{equation}
where \(\otimes\) denotes the Kronecker product.
Equivalently, for each action \(a\in\mathcal{U}\) and Fourier index \(\bm{c}\in\mathcal{C}\),
\begin{equation}
\phi_{a,\bm{c}}(\bm{s},u)
=
\mathbf{1}\{a=u\}\cos(\pi\,\bm{c}^\top \bar{\bm{s}}).
\label{eq:sa_feature_component}
\end{equation}
Thus, only the block corresponding to the executed action is populated by the Fourier basis values, while the other action blocks are zero. In the present setting, \(|\mathcal{U}|=3\) and \(|\mathcal{C}|=6^2=36\), so the resulting state-action feature dimension is \(3\times 36=108\).

For DQN, the proposal policy is represented by an MLP with architecture
\(2\text{--}128\text{--}128\text{--}3\),
ReLU activations, Adam with learning rate \(10^{-4}\), discount factor \(\gamma=0.99\), replay size \(10^5\), mini-batch size \(64\), and target-network synchronization every \(10\) steps.
Exploration follows the exponentially decaying schedule
\begin{align}
&\varepsilon(t)
=
\varepsilon_{\min}
+
(\varepsilon_{\max}-\varepsilon_{\min})e^{-t/\tau},\notag \\
&\varepsilon_{\max}=1.0,
\ \varepsilon_{\min}=0.01,
\ \tau=1000.
\label{eq:eps_schedule}
\end{align}
To isolate the effect of shielding, the DQN comparison uses the same backbone and the same exploration schedule with the shield in \eqref{eq:shield_filter} either enabled or disabled.
In the DQN experiments, one update interval consists of \(150\) steps, and the total number of environment steps is \(10{,}000\).

We also tested a shielded SARSA agent using the same PCIS module and the same shielding mechanism.
Its proposal policy is a linear true-online SARSA(\(\lambda\)) agent with \(\alpha=10^{-3}\), \(\gamma=0.99\), \(\lambda=0.9\), and a linearly decaying exploration parameter from \(0.5\) to \(0.01\).
In the shielded SARSA experiments, one update interval consists of \(300\) executed steps and the maximum number of update intervals is \(100\). Here, \(\alpha_{\eta}\) denotes the implementation-level tuning parameter used in the confidence term of the PCIS update; it is distinct from the theoretical confidence levels \(\eta\) and \(\{\delta_j\}\).

\begin{table}[t]
\caption{DQN hyperparameters used for the shield on/off comparison}
\label{tab:exp_dqn}
\centering
\footnotesize
\setlength{\tabcolsep}{4pt}
\begin{tabular}{ll}
\toprule
Q-network & MLP \(2\text{--}128\text{--}128\text{--}3\) \\
Activation & ReLU \\
Optimizer & Adam \\
Learning rate & \(10^{-4}\) \\
Discount factor & \(\gamma=0.99\) \\
Replay size & \(10^5\) \\
Mini-batch size & \(B=64\) \\
Target update & every \(K=10\) steps \\
\(\varepsilon\)-greedy & \(\varepsilon_{\max}=1.0\), \(\varepsilon_{\min}=0.01\) \\
Decay scale & \(\tau=1000\) \\
Update-interval length & 150 steps \\
DQN step budget & \(10{,}000\) \\
\bottomrule
\end{tabular}
\end{table}

\begin{figure*}[t]
\centering
\includegraphics[width=1\linewidth]{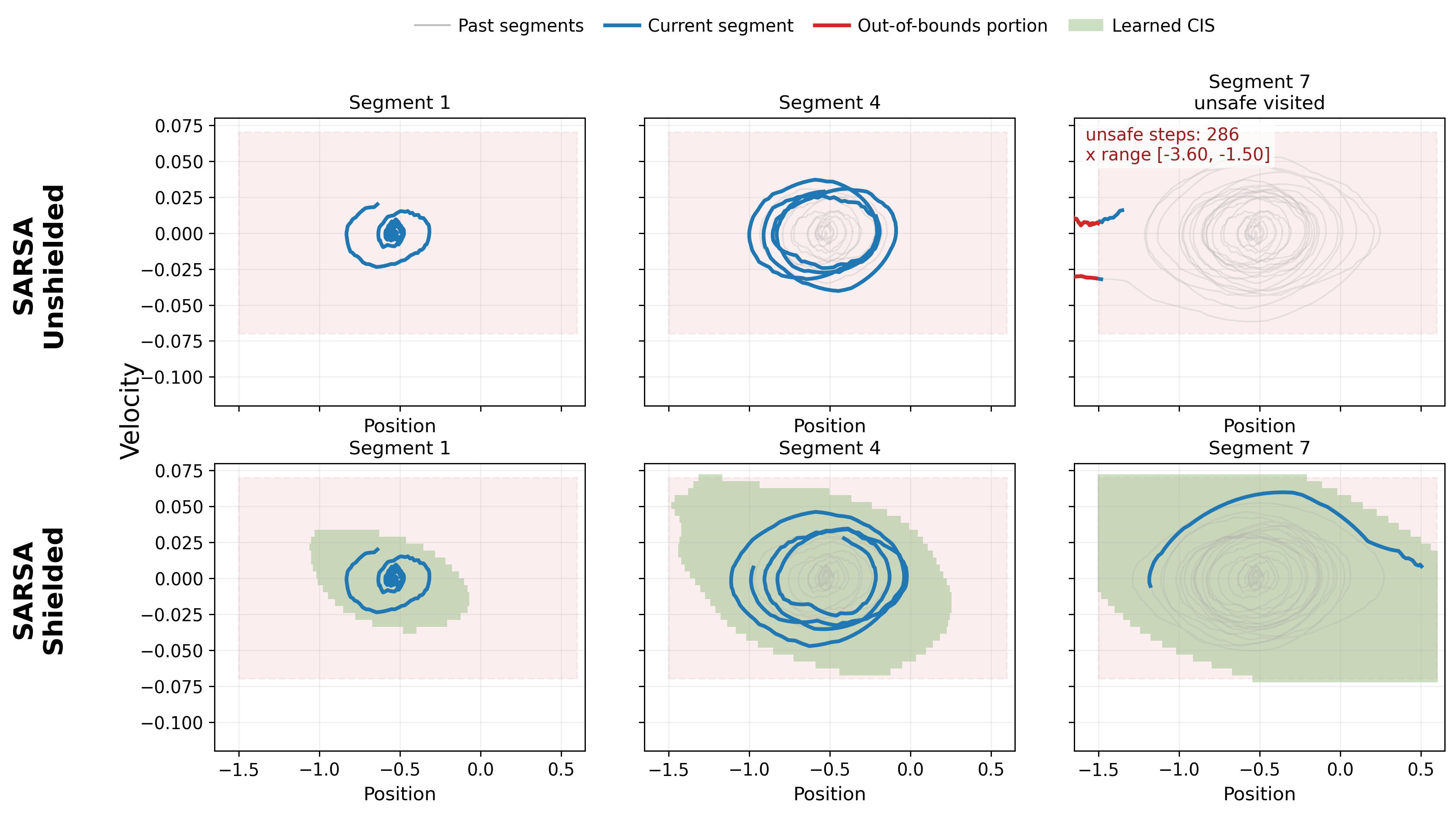}
\caption{Representative state-space trajectories for unshielded SARSA and shielded SARSA. The bold blue curve denotes the trajectory portion generated in the current update interval, the thin gray curves denote earlier update intervals, the red dashed rectangle indicates the safe set, and the green region indicates the current candidate PCIS  when shielding is active.}
\label{fig:results}
\end{figure*}

\begin{figure*}[t]
\centering
\includegraphics[width=1\linewidth]{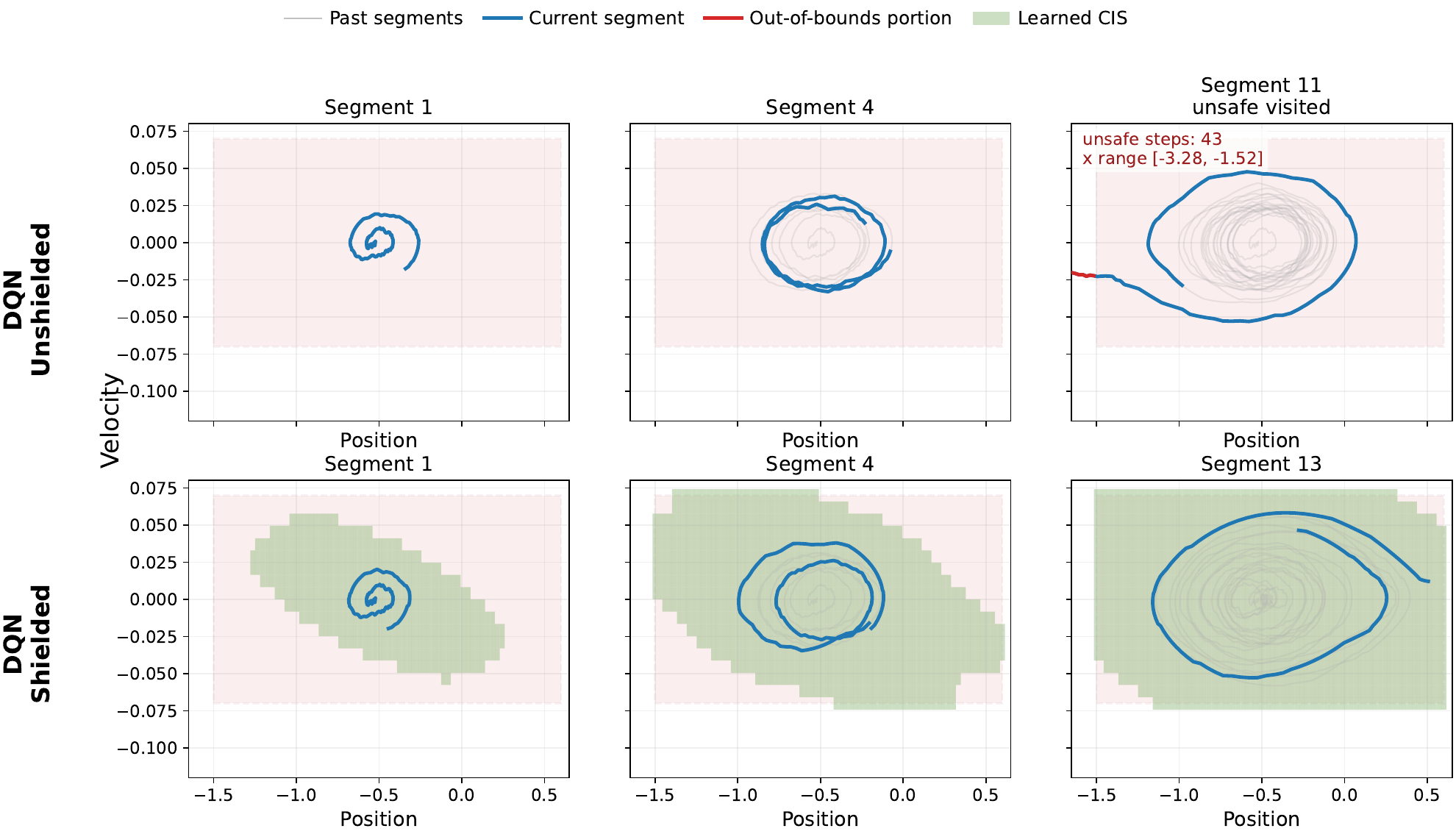}
\caption{Representative state-space trajectories for unshielded DQN and shielded DQN. The bold blue curve denotes the trajectory portion generated in the current update interval, the thin gray curves denote earlier update intervals, the red dashed rectangle indicates the safe set, and the green region indicates the current candidate PCIS  when shielding is active.}
\label{fig:results2}
\end{figure*}

\begin{figure}[t]
\centering
\includegraphics[width=\linewidth]{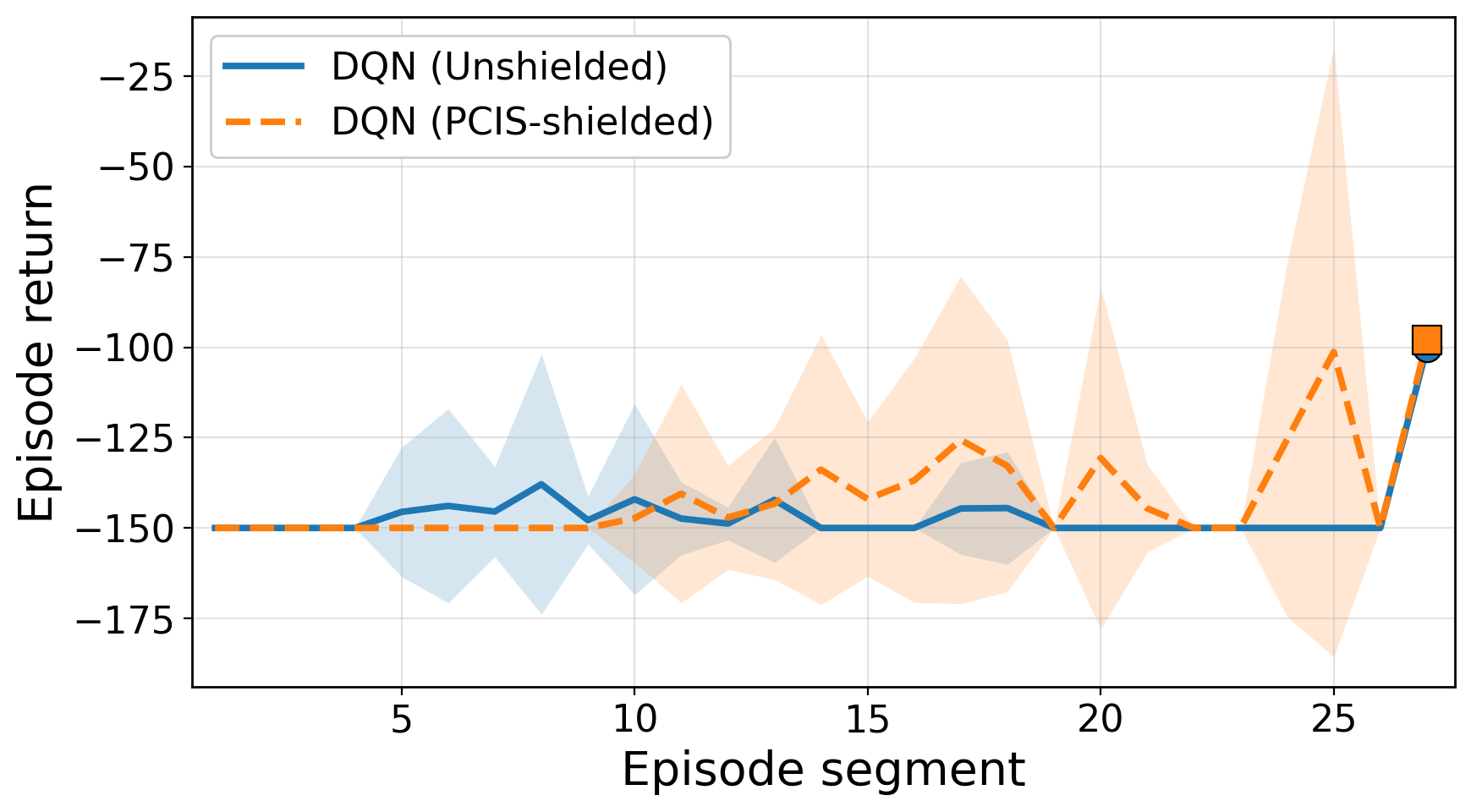}
\caption{Mean \(\pm\) standard deviation of returns accumulated over update intervals for DQN with and without shielding, over 30 random seeds.}
\label{fig:return_compare}
\end{figure}

\begin{figure}[t]
\centering
\includegraphics[width=\linewidth]{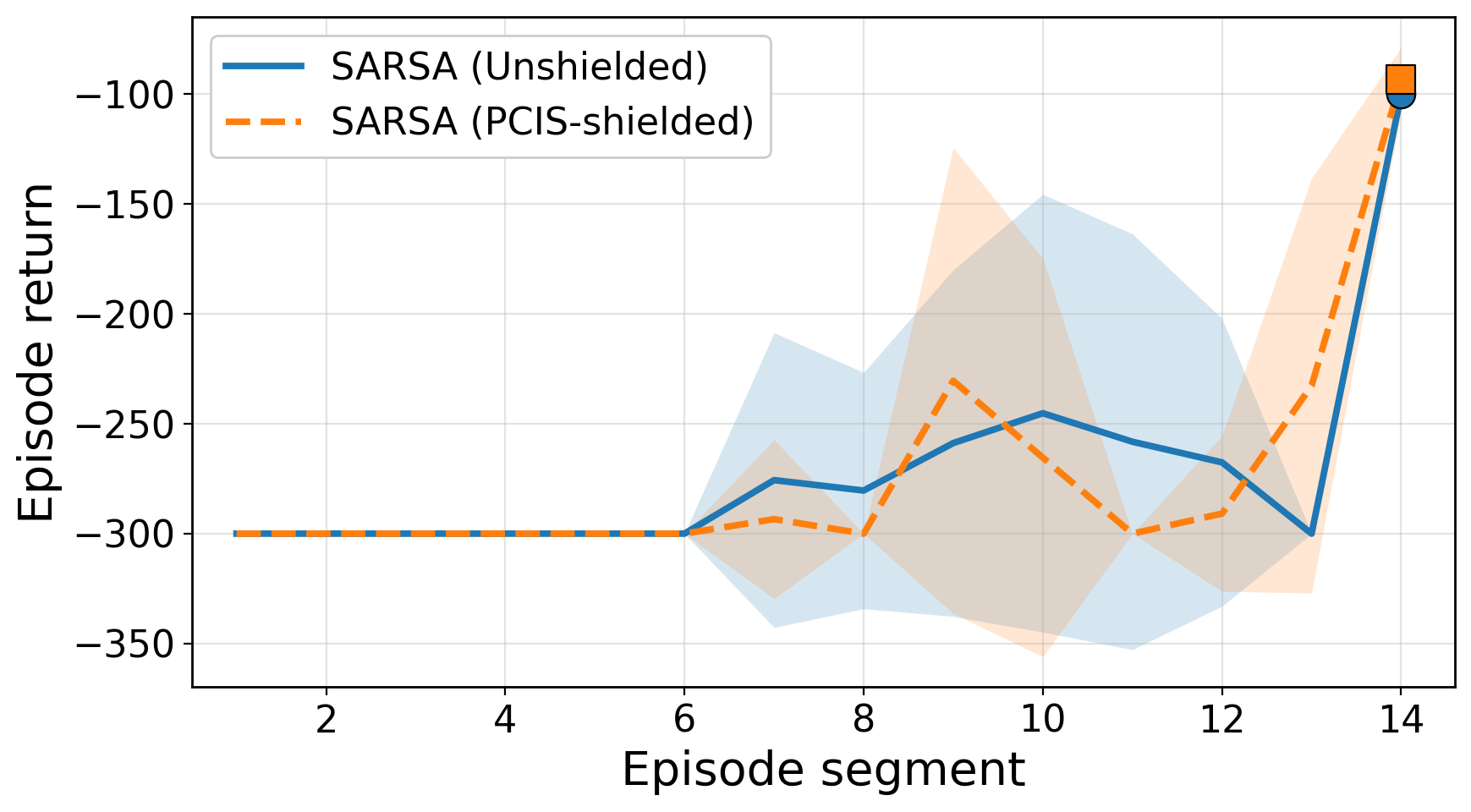}
\caption{Mean $\pm$ standard deviation of returns accumulated over update intervals for SARSA with and without shielding, over 30 random seeds.}
\label{fig:return_compare_sarsa}
\end{figure}

\begin{figure}[t]
\centering
\includegraphics[width=\linewidth]{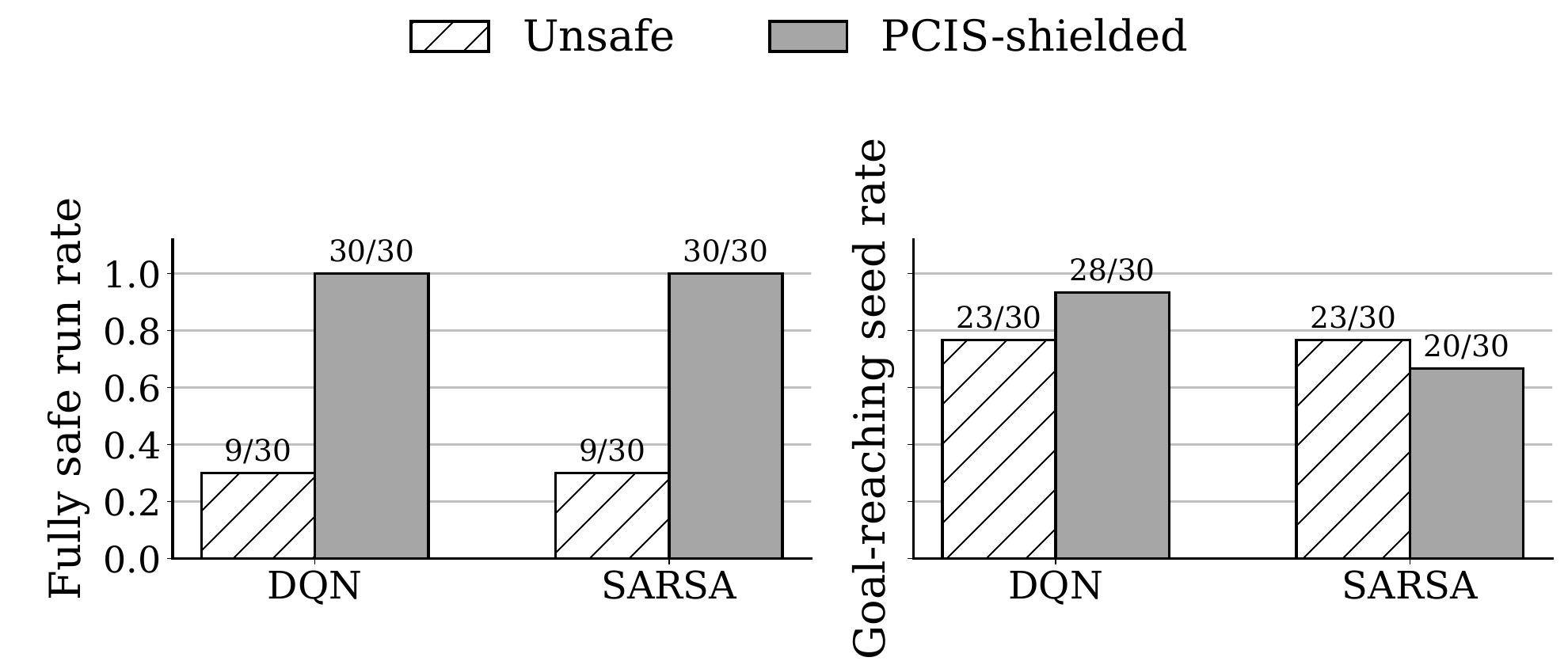}
\caption{Comparison of fully safe run rates (left) and goal-reaching seed rates (right) over 30 random seeds on the modified MountainCar environment. A run is counted as fully safe if it incurs zero unsafe steps over the entire 4000-step training trajectory, and a seed is counted as goal-reaching if the agent reaches the goal at least once during that training budget. The shielded results correspond to the final safety-priority configurations for DQN and SARSA under the same backbone hyperparameters and the same \(\varepsilon\)-greedy base policy as in the corresponding unshielded runs.}
\label{fig:safety_summary}
\end{figure}
\subsection{Simulation results}
Figures~\ref{fig:results} and \ref{fig:results2} show selected representative state-space trajectories for SARSA and DQN, respectively.
In each panel, the red dashed rectangle indicates the prescribed safe set \(\mathcal{X}_S\), the green region indicates the current candidate PCIS  displayed for the current update interval when shielding is active, the bold blue curve denotes the trajectory portion generated in the current update interval, and the thin gray curves denote trajectory portions accumulated in earlier update intervals.
The unshielded panels exhibit excursions outside \(\mathcal{X}_S\), whereas the shielded panels remain visually concentrated within the current candidate PCIS  once shielding becomes active.
Because runs terminate upon reaching the goal, the number of displayed update intervals differs across methods.
Moreover, the DQN and SARSA update intervals have different lengths, so the absolute values of update-interval returns are not directly comparable across the two algorithms.
These trajectory plots should therefore be interpreted as qualitative illustrations of the closed-loop behavior rather than as statistical evidence.

Figures~\ref{fig:return_compare} and \ref{fig:return_compare_sarsa} report the returns accumulated over update intervals as mean \(\pm\) standard deviation over 30 random seeds.
For both DQN and SARSA, the shielded and unshielded return curves remain on the same qualitative scale over much of the displayed horizon, suggesting that shielding does not dramatically degrade learning performance in this proof-of-concept setting.
At the same time, the variability across seeds is non-negligible, and the later portions of the curves should be interpreted with care because different runs terminate after different numbers of update intervals, so fewer seeds may contribute there.
For SARSA, the shielded curve appears somewhat more conservative in parts of the horizon, but it remains broadly comparable to the unshielded curve in overall scale.
These return plots should therefore be interpreted primarily as qualitative evidence that the shield does not induce a large performance deterioration, rather than as a fine-grained statistical comparison of learning efficiency.

Figure~\ref{fig:safety_summary} summarizes two run-level metrics over 30 random seeds: the fully safe run rate and the goal-reaching seed rate.
A run is counted as fully safe if it incurs zero unsafe steps over the entire 4000-step training trajectory, and a seed is counted as goal-reaching if the agent reaches the goal at least once during that training budget.
Under the fully safe run metric, DQN improves from \(9/30\) (\(30.0\%\)) fully safe runs without shielding to \(30/30\) (\(100\%\)) with shielding, while the goal-reaching seed rate increases from \(23/30\) (\(76.7\%\)) to \(28/30\) (\(93.3\%\)).
For SARSA, shielding likewise improves the fully safe run rate from \(9/30\) (\(30.0\%\)) to \(30/30\) (\(100\%\)), whereas the goal-reaching seed rate changes from \(23/30\) (\(76.7\%\)) without shielding to \(20/30\) (\(66.7\%\)) with shielding.
This modest reduction for SARSA is consistent with the more conservative safety-priority shield: in MountainCar, reaching the goal requires sufficiently aggressive momentum-building motion, and restricting actions near the boundary of the candidate safe region can make such trajectories less likely.
These results indicate that, on this modified MountainCar benchmark, the PCIS-based shield can substantially improve safety under a fair comparison between the shielded and unshielded settings.
For DQN, this safety improvement is accompanied by a higher goal-reaching seed rate, whereas for SARSA the safety-priority shield yields a modest reduction in goal-reaching frequency.



\section{Conclusion and Future Directions}
\label{sec:limitations}
The main contribution of this paper is to show that probabilistic controlled invariance can be synthesized directly from data for unknown linear MDPs, yielding an $\eta$-conservative safe set and an accompanying set-valued shield. The present formulation intentionally focuses on the regime in which such guarantees are currently tractable: finite actions and either discrete states or low-dimensional continuous states equipped with a lattice abstraction. Within this regime, the method provides a transparent pipeline from transition data to a deployable runtime safety filter, which can be coupled with standard RL learners without modifying their internal update rules.

Several extensions appear promising. First, the scalability issue of the lattice abstraction is not unique to our setting; rather, it reflects the usual state-explosion phenomenon in abstraction-based synthesis. A natural next step is therefore to combine the present PCIS recursion with compositional or subsystem-based abstractions, so that local certificates are computed on lower-dimensional components and then assembled into a global safety certificate \cite{lavaei2020compositional}. Likewise, adaptive or nonuniform discretizations could concentrate computation near the boundary of the safe set or in regions where the feature map varies rapidly, thereby reducing unnecessary refinement elsewhere. Second, the dependence on a global Lipschitz constant $L_\phi$ should be viewed as a conservative but convenient first step rather than as a fundamental obstacle. In practice, one may expect tighter implementations based on local Lipschitz estimates, data-dependent abstraction errors, or learned state-space metrics, which could reduce conservatism while preserving the same conservative-inclusion logic.
Third, the finite-action assumption mainly enters through the maximization and safe-action selection steps. This suggests a concrete path toward large or continuous action spaces: exhaustive enumeration could be replaced by an optimization oracle over a compact action set, or by adaptive candidate sets constructed from coverings, local linearization, or feature-space maximization. This direction is consistent with existing results on linear bandits with possibly infinite action sets \cite{abbasi2011online}, recent work on linear MDPs with large action spaces \cite{xu2023largeaction}, and continuous-action extensions for low-rank MDPs under additional smoothness assumptions \cite{oprescu2024continuous}.
Overall, we view the present method as a certifiable baseline rather than an endpoint: it isolates a practically meaningful setting in which data-driven PCIS synthesis and shielding can be established cleanly, while also indicating concrete routes toward higher-dimensional systems and richer action spaces. Future work will pursue these directions together with sharper confidence accounting for repeated online shield updates.

\bibliographystyle{IEEEtran}
\bibliography{myrefs_v11_ref30_neurips_and_future_directions}

\appendix
\section{Proof of Theorem~2}

\begin{proof}
For each \(j\), because \(\bar p_{j+1}^{\delta_x}\) is computed from later stages only, it is independent of the dataset \(\mathcal{D}^{(j)}\).
By backward induction from
\eqref{eq:ptilde_dx_terminal}, \eqref{eq:ptilde_dx_rec},
\eqref{eq:pbar_dx_terminal}, and \eqref{eq:pbar_dx_rec},
\begin{equation}
0 \le \tilde p_j^{\delta_x}(x_d) \le 1,\quad
0 \le \bar p_j^{\delta_x}(x) \le 1
\label{eq:pbar_dx_bounded}
\end{equation}
for all \(x_d\in[\mathcal{X}_S]_{\delta_x}\), \(x\in\mathcal{X}\), and \(j=0,\ldots,N\).
Conditioned on \(\bar p_{j+1}^{\delta_x}\), \cref{lemma:expectation} implies that there exists
\(
\theta_j^{\delta_x}\in\mathbb{R}^d
\)
such that
\[
\mathbb{E}[\bar p_{j+1}^{\delta_x}(x')\mid x,u]
=
(\theta_j^{\delta_x})^\top \phi(x,u)
\]
for all \((x,u)\).
Moreover, \(\|\theta_j^{\delta_x}\|_\infty\le 1\), hence \(\|\theta_j^{\delta_x}\|_1\le d\).
Let
\(
\mathcal{E}^{\delta_x}=\bigcap_{j=0}^{N-1}\mathcal{E}_j^{\delta_x}
\).
By \eqref{eq:Ej_dx_prob} and the union bound,
\[
\Pr(\mathcal{E}^{\delta_x})
\ge
1-\sum_{j=0}^{N-1}\delta_j
\ge \eta.
\]

We now prove by backward induction that, on \(\mathcal{E}^{\delta_x}\),
\begin{equation}
\bar p_j^{\delta_x}(x)\le p_j^\Omega(x),\ 
\forall x\in\mathcal{X},\ \forall j=0,\ldots,N.
\label{eq:induction_goal_dx}
\end{equation}
The claim is immediate at \(j=N\).
Indeed, if \(x\notin\Omega\), then
\(
\bar p_N^{\delta_x}(x)=0=p_N^\Omega(x)
\),
while for \(x\in\Omega\),
\[
\bar p_N^{\delta_x}(x)
=
\1_\Omega(q_{\delta_x}(x))
\le 1
=
p_N^\Omega(x).
\]
Assume \eqref{eq:induction_goal_dx} holds at stage \(j+1\).
Fix \(x\in\mathcal{X}\).
If \(x\notin\Omega\), then
\(
\bar p_j^{\delta_x}(x)=0=p_j^\Omega(x)
\).
Now let \(x\in\Omega\) and set \(x_d=q_{\delta_x}(x)\).
Using \eqref{eq:phi_lipschitz}, \eqref{eq:delta_net}, and \(\|\theta_j^{\delta_x}\|_1\le d\),
\[
\left|
(\theta_j^{\delta_x})^\top\phi(x,u)
-
(\theta_j^{\delta_x})^\top\phi(x_d,u)
\right|
\le
dL_\phi\delta_x.
\]
On \(\mathcal{E}_j^{\delta_x}\),
\[
(\theta_j^{\delta_x})^\top\phi(x_d,u)
\ge
(\hat\theta_j^{\delta_x})^\top\phi(x_d,u)-\beta_j\sigma_j(x_d,u).
\]
Hence, for any \(u\in\mathcal{U}\),
\begin{align}
\mathbb{E}[\bar p_{j+1}^{\delta_x}(x')\mid x,u]
&=
(\theta_j^{\delta_x})^\top\phi(x,u) \notag \\
&\ge
(\hat\theta_j^{\delta_x})^\top\phi(x_d,u)
-dL_\phi\delta_x
-\beta_j\sigma_j(x_d,u) \notag \\
&=
\ell_j^{\delta_x}(x_d,u).
\label{eq:barp_lower_dx}
\end{align}
By the induction hypothesis,
\[
\mathbb{E}[p_{j+1}^\Omega(x')\mid x,u]
\ge
\mathbb{E}[\bar p_{j+1}^{\delta_x}(x')\mid x,u]
\ge
\ell_j^{\delta_x}(x_d,u).
\]
Therefore,
\begin{align}
p_j^\Omega(x)
&=
\max_{u\in\mathcal{U}}
\mathbb{E}[p_{j+1}^\Omega(x')\mid x,u] \notag\\
&\ge
\max_{u\in\mathcal{U}}
\sat{\ell_j^{\delta_x}(x_d,u)}.
\end{align}
If \(x_d\notin\Omega\), then
\(
\tilde p_j^{\delta_x}(x_d)=0
\).
Otherwise,
\[
\tilde p_j^{\delta_x}(x_d)
=
\max_{u\in\mathcal{U}}
\sat{\ell_j^{\delta_x}(x_d,u)}
\le p_j^\Omega(x).
\]
In either case,
\[
\bar p_j^{\delta_x}(x)
=
\tilde p_j^{\delta_x}(x_d)
\le p_j^\Omega(x),
\]
which proves \eqref{eq:induction_goal_dx}.
Consequently, on \(\mathcal{E}^{\delta_x}\),
\begin{align}
\widetilde{\mathcal{Q}}^N_{\delta_x,\epsilon}(\Omega)
&=
\{x\in\Omega:\bar p_0^{\delta_x}(x)\ge 1-\epsilon\} \notag\\
&\subseteq
\{x\in\Omega:p_0^\Omega(x)\ge 1-\epsilon\} \notag\\
&=
\mathcal{Q}^N_\epsilon(\Omega).
\end{align}
Since \(\Pr(\mathcal{E}^{\delta_x})\ge \eta\), \eqref{eq:Qtilde_dx_conservative} follows.
\end{proof}

\end{document}